\title{The Iteration Number of Colour Refinement}
\author{Sandra Kiefer}{RWTH Aachen University, Aachen, Germany}{kiefer@cs.rwth-aachen.de}{}{}
\author{Brendan D.\ McKay}{Australian National University, Canberra, Australia}{brendan.mckay@anu.edu.au}{}{}
\authorrunning{S.\ Kiefer and B.\ McKay}
\newcounter{claimcounterglobal}
\newcounter{claimcounter}
\renewenvironment{claim}[1][]{
  
  \medbreak\par\noindent%
  \ifthenelse{\equal{#1}{}}{%
      \setcounter{claimcounter}{0}%
      \refstepcounter{claimcounterglobal}%
      \refstepcounter{claimcounter}%
      \textit{Claim~\arabic{claimcounter}.}
  }{%
    \ifthenelse{\equal{#1}{resume}}{%
      \refstepcounter{claimcounterglobal}%
      \refstepcounter{claimcounter}%
      \textit{Claim~\arabic{claimcounter}.}
    }{%
      \textit{Claim~#1.}
    }
  }
}{
  \par\medbreak
}
\newcommand{\llcurly}{\{\hspace{-3.5pt}\{}
\newcommand{\rrcurly}{\}\hspace{-3.5pt}\}}
\newcommand{\uenda}{\tag*{$\lrcorner$}} 
\DeclareMathOperator{\WL}{WL}
\theoremstyle{plain}
\newtheorem{notation}     [theorem] {Notation}
\def\hlinewd#1{%
  \noalign{\ifnum0=`}\fi\hrule \@height #1 \futurelet
   \reserved@a\@xhline}
\begin{document}

\maketitle

\begin{abstract}
 The Colour Refinement procedure and its generalisation to higher dimensions, the Weisfeiler-Leman algorithm, are central subroutines in approaches to the graph isomorphism problem. In an iterative fashion, Colour Refinement computes a colouring of the vertices of its input graph. 

 A trivial upper bound on the iteration number of Colour Refinement on graphs of order $n$ is $n-1$. We show that this bound is tight. More precisely, we prove via explicit constructions that there are infinitely many graphs $G$ on which Colour Refinement takes $|G|-1$ iterations to stabilise. Modifying the infinite families that we present, we show that for every natural number $n \geq 10$, there are graphs on~$n$ vertices on which Colour Refinement requires at least $n-2$ iterations to reach stabilisation.
\end{abstract}

\section{Introduction}

Colour Refinement, which is also known as Naïve Vertex Classification or the 1-dimensional Weisfeiler-Leman algorithm (1-WL), is an important combinatorial algorithm in theoretical and practical approaches to the graph isomorphism problem. In an iterative fashion, it refines an isomorphism-invariant partition of the vertex set of the input graph. This process stabilises at some point and the final partition can often be used to distinguish non-isomorphic graphs \cite{BabErdSelSta80}. Colour Refinement can be implemented to run in time $O((m+n) \log n)$, where $n$ is the order of the input graph and $m$ is its number of edges \cite{carcro82,mck81}. Most notably, its efficient implementations are used in all competitive graph isomorphism solvers (such as~\texttt{Nauty} and~\texttt{Traces}~\cite{mckaypip14}, \texttt{Bliss}~\cite{JunttilaK07} and \texttt{saucy}~\cite{DargaLSM04}). 

Colour Refinement has been rediscovered many times, one of its first occurences being in a paper on chemical information systems from the 1960s \cite{mor65}. The procedure is applied in plenty of other fields, for example, it can be modified to reduce the dimension of linear programs significantly~\cite{GroheKMS14}. Other applications are in the context of graph kernels~\cite{ShervashidzeSLMB11} or static program analysis~\cite{LiSSSS16}. A recently discovered connection to deep learning shows that the expressive power of Colour Refinement is captured by graph neural networks \cite{morritfey+19}.

As described above, Colour Refinement computes a stable colouring of its input graph. It is known that two given graphs result in equal colourings, i.e.\ are not distinguished by Colour Refinement, if and only if there is a fractional isomorphism between them \cite{god97,ramscheiull94,tin91}. Moreover, the graphs which Colour Refinement identifies up to isomorphism (i.e.\ distinguishes from all non-isomorphic ones) have been completely characterised \cite{DBLP:journals/cc/ArvindKRV17,kieschweisel15}. 

To obtain its final colouring, the algorithm proceeds in iterations. In this paper, we investigate how many iterations it takes for the algorithm to terminate. More specifically, for~$n \in \mathbb{N}$, we are interested in~$\WL_1(n)$, the maximum number of iterations required to reach stabilisation of Colour Refinement among all graphs of order~$n$.

While not directly linked to the running time on a sequential machine, the iteration number corresponds to the parallel running time of Colour Refinement (on a standard PRAM model) \cite{groverb06,KoblerV08}. Furthermore, via a connection to counting logics, a bound on the iteration number for graphs of a fixed size directly translates into a bound on the descriptive complexity of the difference between the two graphs, namely into a bound on the quantifier depth of a distinguishing formula in the extension of the 2-variable fragment of first-order logic by counting quantifiers \cite{caifurimm92,immlan90}. Moreover, the iteration number of 1-WL equals the depth of a graph neural network that outputs the stable vertex colouring of the underlying graph with respect to Colour Refinement \cite{morritfey+19}. 

Considering paths, one quickly determines that~$\WL_1(n)\geq \frac{n}{2} -1$ holds for every $n \in \mathbb{N}$. By contrast, on random graphs, the iteration number is asymptotically almost surely~$2$~\cite{BabErdSelSta80}. The best published lower bound on the iteration number of Colour Refinement on $n$-vertex graphs is $n-O(\sqrt{n})$~\cite{krever15}. Concerning the upper bound, the trivial inequality~$\WL_1(n) \leq n-1$ holds for every repeated partitioning of a set of size~$n$ and it does not take into account any further properties of the input graph or of the algorithm used to execute the partitioning. Still, no improvement over this upper bound has been established.

Our first main result reads as follows.

\begin{theorem}\label{thm:main:infinite}
 For every $n \in \mathbb{N}_{\geq 10}$ with $n = 12$ or $n \bmod 18 \notin \{6,12\}$, it holds that $\WL_1(n) = n-1$. 
\end{theorem}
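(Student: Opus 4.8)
The plan is to reduce the theorem to its substantive half. Since any chain of strictly refining partitions of an $n$-element set has length at most $n-1$, the inequality $\WL_1(n) \le n-1$ is automatic, so it suffices to exhibit, for every admissible $n$, an $n$-vertex graph $G_n$ on which Colour Refinement performs exactly $n-1$ refinement steps. Such a $G_n$ must be extremal in the strongest possible sense: the sequence of colour partitions it produces has to have sizes $1, 2, 3, \dots, n$, gaining precisely one class per iteration and terminating in the discrete partition. In particular $G_n$ is necessarily rigid, and the whole difficulty is to arrange that the refinement ``wave'' creeps forward by a single vertex in each round and never overtakes itself.

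I would obtain the $G_n$ as members of a small number of infinite families, sharpening the kind of construction that underlies the known $n - O(\sqrt n)$ lower bound. The design principle is a one-sided propagation scheme: a long \emph{backbone}, behaving essentially like an induced path along which colour information travels one vertex at a time, together with a bounded asymmetric \emph{gadget} attached at one end that is the only place where the initial uniform colouring can first be broken. The reason a plain path yields only roughly $n/2$ iterations is that refinement starts simultaneously from both degree-$1$ endpoints; to defeat this I would make the far end of the backbone locally indistinguishable from its interior --- for instance by routing the backbone through a regular or cyclically symmetric block --- so that, until the wave emanating from the gadget reaches it, every backbone vertex carries the same colour. Varying the size of the gadget (and of the symmetric block) changes $n$ by a bounded additive amount, which is exactly why the construction realises all residues modulo $18$ except two; the value $n = 12$ and the excluded residues $6, 12 \bmod 18$ are the seams of this bookkeeping, with $n = 12$ patched by a single sporadic graph.

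The verification for each family is an explicit simulation of Colour Refinement, organised as an induction on the round number $t$: one shows that after $t$ rounds the colour classes are exactly the level sets, truncated at distance $t$, of the distance-to-the-gadget function on the backbone, so that round $t$ splits off precisely one new class (the vertices at distance $t$), the partition is still non-discrete for $t < n-1$, and after round $n-1$ it is discrete and hence stable. The bulk of the technical work --- and the main obstacle --- is proving that the propagation is genuinely one-sided and genuinely unit-speed: one must check, by a local analysis of neighbourhood colour multisets, that no accidental early refinement is triggered by the far end or by the global structure, and that the wavefront advances by exactly one vertex, never two, in every round. Getting these local computations to come out right for all sizes simultaneously is what forces the case distinction modulo $18$ and the exceptional treatment of the smallest values of $n$.
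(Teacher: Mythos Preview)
Your proposal is a plan, not a proof: no concrete graph is exhibited, and the mechanism you describe would not achieve $n-1$ iterations as stated. The gap is in the ``one-sided unit-speed wave'' picture. You assert that after $t$ rounds the colour classes are the distance-$\le t$ level sets from the gadget along the backbone, but this invariant can hold for at most $L$ rounds, where $L$ is the backbone length. Once the wave reaches the ``regular or cyclically symmetric block'' at the far end, you give no account of how that block is then refined one vertex at a time for the remaining rounds; by your own design the block is locally homogeneous, so the arriving wave will typically split several of its vertices simultaneously, violating the one-class-per-round requirement. (If instead the backbone closes on itself with no block, the wave meets itself from two sides and the same failure occurs --- this is exactly why a bare path gives only $\lfloor (n-1)/2\rfloor$ iterations.) Your explanation of the modulus $18$ is likewise post hoc: nothing in a bounded-gadget-plus-path scheme forces period $18$ rather than, say, period equal to the gadget size.

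The paper's construction uses a genuinely different mechanism. The graphs are built from \emph{pairs}: roughly $n/2$ pairs in a linear order $\prec$, successive pairs joined by matchings, with two distinguished pairs connected back to $\min(\prec)$. Refinement runs in two phases: first (about $n/2$ rounds) all classes shrink to size~$2$; then (another $n/2$ rounds) the pairs split to singletons one by one along $\prec$. The first phase is the delicate one: the wave does not traverse the linear structure once but \emph{several times, back and forth}, each pass refining a six-row column structure one level further (the ``path propagation'' claims in the proof of Theorem~\ref{thm:infinite:families}). This multi-pass behaviour is what lets a structure of linear length $n/2$ produce $n-1$ refinement rounds. The resulting families have orders $6k+c$ and $18k+c$ for a handful of constants~$c$; together with an extension step (attach a fresh vertex adjacent to all degree-$2$ vertices, Lemma~\ref{lem:extension:max:iteration:number}) this covers all residues modulo $18$ except $6$ and $12$, which is where that modulus actually originates.
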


Thus, there are infinitely many $n \in \mathbb{N}$ with $\WL_1(n) = n-1$. We can even determine the iteration number up to an additive constant of 1 for all $n \in \mathbb{N}$ (where the precise numbers for $n \leq 9$ can easily be determined computationally), as stated in our second main result.

\begin{theorem}\label{thm:main:n-2}
 For every $n \in \mathbb{N}_{\geq 10}$, it holds that $\WL_1(n) \in \{n-2, n-1\}$. 
\end{theorem}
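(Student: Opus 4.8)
The plan is to leverage Theorem~\ref{thm:main:infinite}, which already pins down $\WL_1(n) = n-1$ for all $n \geq 10$ outside the residue classes $n \bmod 18 \in \{6,12\}$ (excepting $n=12$, which is also covered). So the only work left is to show $\WL_1(n) \geq n-2$ for the remaining values of $n$, i.e.\ those with $n \geq 10$, $n \neq 12$, and $n \bmod 18 \in \{6,12\}$; combined with the trivial upper bound $\WL_1(n) \leq n-1$, this yields $\WL_1(n) \in \{n-2, n-1\}$ for every $n \geq 10$.

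First I would revisit the explicit infinite families constructed for Theorem~\ref{thm:main:infinite}. Those families presumably come in several congruence-class flavours modulo $18$ (or some similar modulus), each contributing graphs of order $n$ with iteration number exactly $n-1$. To fill the gap at $n$ with $n \bmod 18 \in \{6,12\}$, the natural move is to take a graph $G$ from a nearby family — one of order $n-1$ or $n+1$ on which Colour Refinement already needs close to the maximal number of iterations — and perform a small surgery (adding or deleting a vertex, attaching a short pendant path, or subdividing/contracting an edge) so that the resulting $n$-vertex graph still forces at least $n-2$ iterations. The key point is that local modifications of this type typically change the iteration number by at most a bounded amount, and a path-like "tail" attached to a gadget contributes iterations linear in its length, so one extra vertex costs at most one iteration.

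The key steps, in order, would be: (1) state precisely which residues are not covered by Theorem~\ref{thm:main:infinite} and reduce the theorem to proving $\WL_1(n) \geq n-2$ for those; (2) for each such $n$, exhibit a concrete $n$-vertex graph $G_n$, obtained by modifying a member of the families from the proof of Theorem~\ref{thm:main:infinite}; (3) analyse the run of Colour Refinement on $G_n$ iteration by iteration, showing that the colour classes continue to split one at a time for $n-2$ rounds before stabilising — this is where one tracks the propagation of the "refinement wave" along the gadget and verifies no two classes split simultaneously. The main obstacle I expect is step~(3): verifying that the modification does not accidentally create a symmetry or a shortcut that collapses two refinement steps into one, which would drop the count below $n-2$. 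This requires a careful, case-by-case analysis of the modified gadgets' colour dynamics, analogous to (but slightly more delicate than) the analysis already done for the unmodified families, since the added/removed vertex can interact with the boundary of the gadget.
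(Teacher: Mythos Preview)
Your plan is correct and follows the same outline as the paper, but you are overcomplicating step~(3). The key observation you miss is that the residues $6$ and $12$ modulo $18$ are both even, so every uncovered $n$ is even and hence $n-1$ is odd; since Theorem~\ref{thm:main:infinite} excludes only even residues, it already supplies a long-refinement graph $G'$ of order $n-1$ for every such $n$ (all odd orders $\geq 11$ are covered). The paper then performs the simplest possible surgery: it adds an \emph{isolated} vertex $v$ to $G'$. Because $v$ has degree $0$ while every other vertex has positive degree, $v$ becomes a singleton class after one iteration, and thereafter $\pi^i_G = \pi^i_{G'} \cup \{\{v\}\}$ for all $i$. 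Thus $\WL_1(G) = \WL_1(G') = (n-1)-1 = n-2$, with no case analysis required. Your anticipated ``careful, case-by-case analysis of the modified gadgets' colour dynamics'' evaporates once you commit to this particular modification; the pendant paths, subdivisions, and contractions you list would indeed demand such an analysis, but the isolated-vertex trick sidesteps it entirely.
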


We obtain our bounds via an empirical approach. More precisely, we have designed a procedure that enables us to systematically generate for all $n \leq 64$ graphs of order $n$ that obey certain constraints (to render the procedure tractable) and on which Colour Refinement takes $n-1$ iterations to stabilise. Analysing the graphs, we determined the connections between colour classes during the execution of the algorithm in detail. If the vertex degrees that are present in the graph are low, then the connections between colour classes of size 2 are restricted. This allows us to develop an elegant graphical visualisation and a compact string representation of the graphs with low vertex degrees that take $n-1$ iterations to stabilise. Using these encodings, we are able to provide infinite families with $n-1$ Colour Refinement iterations until stabilisation.  

Our analysis enables a deep understanding of the families that we present. Via slight modifications of the graph families, we can then cover a large portion of graph sizes and, allowing to go from connected graphs to general graphs, we can construct the graphs that yield Theorem \ref{thm:main:n-2}.

\paragraph*{Related work}

Colour Refinement is the 1-dimensional version of the so-called Weisfeiler-Leman algorithm. For every $k \in \mathbb{N}$, there exists a generalisation of it ($k$-WL), which colours vertex $k$-tuples in the input graph instead of single vertices only. See \cite{kiefer} for an in-depth study of the main parameters of Colour Refinement and $k$-WL. 

Similarly as for Colour Refinement, one can consider the number $\WL_k(n)$ of iterations of $k$-WL on graphs of order $n$. Notably, contrasting our results for Colour Refinement, in \cite{kieschwe16}, it was first proved that the trivial upper bound of $\WL_2(n) \leq n^2-1$ is not even asymptotically tight (see also the journal version \cite{kieschw19}). This foundation fostered further work, leading to an astonishingly good new upper bound of $O(n \log n)$ for the iteration number of 2-WL \cite{lichponschwei19}.

For fixed~$k>1$, it is already non-trivial to show linear lower bounds on~$\WL_k(n)$. Modifying a construction of Cai, F\"urer, and Immerman~\cite{caifurimm92}, this was achieved by F\"{u}rer \cite{Furer01}, who showed that~$\WL_k(n) \in \Omega(n)$, remaining to date the best known lower bound when the input is a graph. Only when considering structures with relations of higher arity than 2 as input, better lower bounds on the iteration number of~$k$-WL have been proved~\cite{BerkholzN16}. 

For $k > 2$, regarding upper bounds on the iteration number of $k$-WL, without further knowledge about the input graph, no significant improvements over the trivial upper bound $n^k-1$ are known.\footnote{Note that the bound $n^k-1$ is not tight, since the initial partition of the $k$-tuples already has multiple classes, for example, one consisting of all tuples of the form $(v,v,\dots,v)$.} Still, when the input graph has bounded treewidth or is a 3-connected planar graph, polylogarithmic upper bounds on the iteration number of $k$-WL needed to identify the graph are known \cite{groverb06,verb07}.

Although for every natural number $k$, there are non-isomorphic graphs that are not distinguished by~$k$-WL~\cite{caifurimm92}, it is known that for every graph class with a forbidden minor, a sufficiently high-dimensional Weisfeiler-Leman algorithm correctly decides isomorphism~\cite{Grohe12}. Recent results give new upper bounds on the dimension needed for certain interesting graph classes \cite{grokie19,groneu19}. A closely-related direction of research investigates what properties the Weisfeiler-Leman algorithm can detect in graphs \cite{ArvindFKV2018,fuhlkoebverb20,fur17}.

\section{Preliminaries}

By $\mathbb{N}$, we denote the set of natural numbers, i.e.\ $\{1,2,\dots\}$. We set $\mathbb{N}_0 \coloneqq \mathbb{N} \cup \{0\}$ and, for $k,\ell \in \mathbb{N}_0$, we define $[k,\ell] \coloneqq \{n \in \mathbb{N}_0 \mid k \leq n \leq \ell\}$ and $[k] \coloneqq [1,k]$. For a set $S$, a \emph{partition} of $S$ is a set $\Pi$ of non-empty sets such that $\bigcup_{M \in \Pi} M = S$ and for all $M, M' \in \Pi$ with $M \neq M'$, it holds that $M \cap M' = \emptyset$. For two partitions~$\Pi$ and~$\Pi'$ of the same set $S$, we say that~$\Pi'$ is \emph{finer} than~$\Pi$ (or $\Pi'$ \emph{refines}~$\Pi$) if every element of~$\Pi'$ is a (not necessarily proper) subset of an element of~$\Pi$. We write~$\Pi \succeq \Pi'$ (and equivalently~$\Pi' \preceq \Pi$) to express that~$\Pi'$ is finer than~$\Pi$. Concurrently, we say that~$\Pi$ is \emph{coarser} than~$\Pi'$. If both $\Pi \succeq \Pi'$ and $\Pi' \succeq \Pi$ hold, we denote this by $\Pi \equiv \Pi'$.

For $S \neq \emptyset$, the partition $\{S\}$ is the \emph{unit} partition of $S$. The partition $\big\{ \{s\} \mid s \in S\big\}$ is called the \emph{discrete partition} of $S$. A set of cardinality 1 is a \emph{singleton}.

All graphs that we consider in this paper are finite and simple, i.e.\ undirected without self-loops at vertices. For a graph with vertex set $V(G)$ and edge set $E(G)$, its \emph{order} is $|G| \coloneqq |V(G)|$. For a vertex $v \in V(G)$, we denote by $N(v)$ the neighbourhood of $v$ in $G$, i.e.\ the set $\{w \mid \{v,w\} \in E(G)\}$. Similarly, for a vertex set $W$, we set $N(W) \coloneqq \big\{v \mid v \notin W, \exists w\in W \colon v \in N(w)\big\}$. The \emph{degree} of a vertex $v$ is $\deg(v) \coloneqq |N(v)|$ (since the graph $G$ will be clear from the context, we do not need to include it in our notation). We also set $\deg(G) \coloneqq \{\deg(v) \mid v \in V(G)\}$.
If there is a $d \in \mathbb{N}_0$ such that $\deg(G) = \{d\}$, the graph $G$ is $d$-regular. A \emph{regular} graph is a graph that is $d$-regular for some $d \in \mathbb{N}_0$. By a \emph{matching}, we mean a 1-regular graph. 

Let $G$ be a graph with at least two vertices. If there are sets $\emptyset \neq P, Q \subseteq V(G)$ such that $V(G) = P \cup Q$ and $P \cap Q = \emptyset$ and $E(G) \cap \{\{v,w\} \mid v,w \in P\} = \emptyset = E(G) \cap \{\{v,w\} \mid v,w \in Q\}$, then $G$ is \emph{bipartite (on bipartition $(P,Q)$)}. If, additionally, $\big\{\{v,w\} \mid v \in P, w \in Q\big\} = E(G)$, the graph $G$ is \emph{complete bipartite}.

For $k, \ell \in \mathbb{N}_0$, a \emph{$(k,\ell)$-biregular graph (on bipartition $(P,Q)$)}
  is a bipartite graph on bipartition $(P,Q)$ such that
  for every $v \in P$, it holds that $|N(v)| = k$, and
  for every $w \in Q$, it holds that $|N(w)| = \ell$. A \emph{biregular} graph is a graph $G$ for which there are $P, Q \subseteq V(G)$ and $k, \ell \in \mathbb{N}_0$ such that $G$ is $(k,\ell)$-biregular on bipartition $(P,Q)$.  

For a graph $G$ and a set $V' \subseteq V(G)$, we let $G[V']$ be the \emph{induced subgraph of $G$ on $V'$}, i.e.\ the subgraph of $G$ with vertex set $V'$ and edge set $E(G) \cap \{\{v,w\} \mid v,w \in V'\}$. We define $G - V' \coloneqq G[V(G) \setminus V']$. Furthermore, for vertex sets $V_1, V_2 \subseteq V(G)$, we denote by $G[V_1, V_2]$ the graph with vertex set $V_1 \cup V_2$ and edge set $E(G) \cap \{\{v_1,v_2\} \mid v_1 \in V_1, v_2 \in V_2\}$. 

A \emph{coloured graph} is a tuple $(G,\lambda)$, where $G$ is a graph and $\lambda \colon V(G) \rightarrow \mathcal{C}$ is a function that assigns colours (i.e.\ elements from a particular set $\mathcal{C}$) to the vertices. We interpret all graphs treated in this paper as coloured graphs and just write $G$ instead of $(G,\lambda)$ when $\lambda$ is clear from the context. If the colouring is not specified, we assume a monochromatic colouring, i.e.\ all vertices have the same colour.

For a coloured graph $G$ with colouring $\lambda$, a \emph{(vertex) colour class} of $G$ is a maximal set of vertices that all have the same $\lambda$-colour. Every graph colouring $\lambda$ induces a partition $\pi(\lambda)$ of $V(G)$ into the vertex colour classes with respect to $\lambda$.

\section{Colour Refinement}

Colour Refinement proceeds by iteratively refining a partition of the vertices of its input graph until the partition is stable with respect to the refinement criterion.

\begin{definition}[Colour Refinement]\label{def:k:dimensional:weisfeiler:lehman:refinement} Let~$\lambda \colon V(G) \rightarrow \mathcal{C}$ be a colouring of the vertices of a graph~$G$, where~$\mathcal{C}$ is some set of colours. The colouring computed by Colour Refinement on input $(G, \lambda)$ is defined recursively: we set $\chi^0_{G} \coloneqq \lambda$, i.e.\ the initial colouring is $\lambda$. For $i \in \mathbb{N}$, the colouring $\chi^i_{G}$ computed by Colour Refinement after $i$ iterations on $G$ is defined as $\chi^i_{G}(v) \coloneqq \big(\chi^{i-1}_G(v), \llcurly \chi^{i-1}_G(w) \mid w \in N(v) \rrcurly\big)$.
\end{definition}

That is, $\chi_G^i(v)$ consists of the colour of $v$ from the previous iteration as well as the multiset of colours of neighbors of $v$ from the previous iteration. It is not difficult to see that~$\pi(\chi^{i-1}_G) \succeq \pi(\chi^{i}_G)$ holds for every graph $G$ and every $i \in \mathbb{N}$. Therefore, there is a unique minimal integer $j$ such that $\pi(\chi^j_G) \equiv \pi(\chi^{j+1}_G)$. For this value $j$, we define the \emph{output} of Colour Refinement on input $G$ to be $\chi_{G} \coloneqq \chi^j_G$ and call $\chi_G$ and $\pi(\chi_G)$ \emph{the stable colouring} and \emph{the stable partition}, respectively, of $G$. Accordingly, executing~$i$ \emph{Colour Refinement iterations} on $G$ means computing the colouring~$\chi_G^{i}$. We call a graph~$G$ with colouring $\lambda$ and the induced partition $\pi(\lambda)$ \emph{stable} if~$\pi(\lambda) \equiv \pi(\chi_G)$. 
Note that for all $P,Q \in \pi(\chi_G)$ with $P \neq Q$, the graph $G[P]$ is regular and $G[P,Q]$ is biregular.

Colour Refinement can be used to check whether two given graphs $G$ and $G'$ are non-isomorphic by computing the stable colouring on the disjoint union of the two. If there is a colour~$C$ such that, in the stable colouring, the numbers of vertices of colour $C$ differ in $G$ and $G'$, they are non-isomorphic. However, even if they agree in every colour class size in the stable colouring, the graphs might not be isomorphic. It is not trivial to describe for which graphs this isomorphism test is always successful (see~\cite{DBLP:journals/cc/ArvindKRV17,kieschweisel15}).

\begin{notation}
We write $\WL_1(G)$ for the number of iterations of Colour Refinement on input $G$, that is, $\WL_1(G) = j$, where $j$ is the minimal integer for which $\pi(\chi^j_G) = \pi(\chi^{j+1}_G)$. Similarly, for $n \in \mathbb{N}$, we write $\WL_1(n)$ to denote the maximum number of iterations that Colour Refinement needs to reach stabilisation on an $n$-vertex graph. 
\end{notation}

We call every graph $G$ with $\WL_1(G) = |G|-1$ a \emph{long-refinement graph}.

\begin{fact}\label{fact:paths}
Let $G$ be an uncoloured path with $n$ vertices. Then $\WL_1(G) = \lfloor\frac{n-1}{2}\rfloor$.
\end{fact}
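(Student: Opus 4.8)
The plan is to analyse Colour Refinement on an uncoloured path $P_n$ with vertices $v_1, v_2, \dots, v_n$ in order, and to track how the colour classes evolve. Initially $\chi^0$ is monochromatic, so $\pi(\chi^0) = \{V(P_n)\}$. After one iteration, the two endpoints (degree $1$) get a colour distinct from that of the internal vertices (degree $2$), so $\pi(\chi^1)$ has (at most) two classes, except in the degenerate small cases. The key observation I would make precise is that the stable partition of a path is exactly the partition according to distance from the nearer endpoint: two vertices $v_i$ and $v_j$ are in the same stable class if and only if $\min(i-1, n-i) = \min(j-1, n-j)$. One inclusion is easy from stability (each such ``distance layer'' induces a regular graph and the bipartite graphs between consecutive layers are biregular); the other follows because Colour Refinement refines this partition, as one shows by induction using that an endpoint is recognised immediately and each further iteration peels off the next layer.

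The heart of the argument is then to show that this peeling happens at the slowest possible rate: after $i$ iterations, precisely the vertices at distance $< i$ from an endpoint have been isolated into their correct stable classes, while all remaining vertices (those at distance $\geq i$ from both ends) still share a common colour. Formally I would prove by induction on $i$ that for $0 \le i \le \lfloor (n-1)/2 \rfloor$, the partition $\pi(\chi^i_{P_n})$ consists of the singleton-per-layer classes $\{v_{k+1}, v_{n-k}\}$ for $k < i$ together with the one remaining block $\{v_{i+1}, \dots, v_{n-i}\}$. The induction step is the routine computation: a vertex $v_{i+1}$ has one neighbour ($v_i$) whose colour was already separated at step $i$ and one neighbour ($v_{i+2}$) still in the big block, whereas a vertex $v_{i+2}$ (if it is still in the big block, i.e.\ $i+2 \le n-i-1$) has both neighbours in the big block; hence $v_{i+1}$ gets split off at step $i+1$ while the rest stay together. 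Care is needed at the ``middle'' of the path, distinguishing $n$ odd (a single central vertex forming its own class) from $n$ even (a central edge $\{v_{n/2}, v_{n/2+1}\}$ forming a class of size two), which is exactly why the answer is $\lfloor (n-1)/2 \rfloor$ rather than $\lceil (n-1)/2 \rceil$.

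Combining the two parts: by the induction the partition is still non-discrete (in fact the big block has size $\ge 2$) after $\lfloor (n-1)/2 \rfloor - 1$ iterations but becomes stable at $\lfloor (n-1)/2 \rfloor$ iterations, and one checks it does not change afterwards; moreover the refinement genuinely strictly refines at each of these steps, so $\WL_1(P_n) = \lfloor (n-1)/2 \rfloor$. Small cases ($n \le 3$) should be checked by hand to make sure the degree-based separation at step $1$ behaves as claimed (e.g.\ $n=1$ gives $0$, $n=2$ gives $0$, $n=3$ gives $1$).

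The main obstacle I anticipate is purely bookkeeping: getting the boundary behaviour at the centre of the path exactly right in the inductive step, in particular verifying that when $i$ reaches $\lfloor (n-1)/2\rfloor$ the remaining block stops being refinable (for even $n$ the two central vertices each have one neighbour in a singleton layer and one neighbour that is their partner, so they are indistinguishable and the partition stabilises), and confirming that no earlier stabilisation occurs. Everything else is a straightforward unwinding of Definition~\ref{def:k:dimensional:weisfeiler:lehman:refinement}.
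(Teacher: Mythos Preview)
Your proposal is correct and follows essentially the same approach as the paper's own proof sketch: endpoints are separated in the first iteration by degree, and then the distance-to-an-endpoint information propagates one layer per iteration toward the centre, terminating after $\lfloor (n-1)/2\rfloor$ steps. The paper states this in two sentences without the explicit induction or the parity/boundary bookkeeping you outline, so your version is a more detailed execution of the same idea rather than a different argument.
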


\begin{proof}[Proof sketch]
 In the first iteration, the two end vertices are distinguished from all others because they are the only ones with degree $1$. Then in each iteration, the information of being adjacent to a ``special'' vertex, i.e.\ the information about the distance to a vertex of degree 1, is propagated one step closer to the vertices in the centre of the path. This procedure takes $\lfloor \frac{n-1}{2} \rfloor$ iterations. 
\end{proof}

In 2015, Krebs and Verbitsky improved on the explicit linear lower bound for graphs of order $n$ given by Fact \ref{fact:paths} by constructing a family of pairs of graphs whose members of order $n$ can only be distinguished after $n - 8 \sqrt{n}$ Colour Refinement iterations (see \cite[Theorem 4.6]{krever15}). Hence, since for a set $\{v_1, \dots, v_n\} \eqqcolon S$ and partitions $\pi_1, \dots, \pi_\ell$ of $S$ that satisfy
\[\pi_1 \succneqq \pi_2 \succneqq \dots \succneqq \big\{\{v_1\}, \dots \{v_n\}\big\} = \pi_{\ell},\]
 it holds that $\ell \leq n-1$, we obtain the following corollary.

\begin{corollary}
For every $n \in \mathbb{N}$, it holds that $n - 8 \sqrt n \leq \WL_1(n) \leq n-1$.
\end{corollary}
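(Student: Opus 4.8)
The corollary packages two bounds of quite different character, which I would establish separately; only the upper bound calls for an argument of our own. For the upper bound $\WL_1(n)\le n-1$, fix an $n$-vertex graph $G$ and set $j \coloneqq \WL_1(G)$. We already know that $\pi(\chi^{i-1}_G) \succeq \pi(\chi^i_G)$ for every $i \in \mathbb{N}$, and by minimality of $j$ the refinements at steps $1,\dots,j$ are all strict, so $\pi(\chi^0_G) \succneqq \pi(\chi^1_G) \succneqq \dots \succneqq \pi(\chi^j_G)$. Each strict refinement raises the number of colour classes by at least one, hence the class counts along this chain form a strictly increasing sequence of $j+1$ integers lying in $[1,n]$; therefore $j+1 \le n$, i.e.\ $\WL_1(G) \le n-1$. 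As $G$ was arbitrary, $\WL_1(n) \le n-1$; this is just the chain-length bound recalled above, applied to the sequence of partitions produced by Colour Refinement.

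For the lower bound $\WL_1(n)\ge n-8\sqrt n$, note first that for $n < 64$ the right-hand side is nonpositive and the inequality is trivial, and for a further range of small $n$ it is already implied by Fact~\ref{fact:paths} via $\WL_1(n) \ge \lfloor (n-1)/2\rfloor$. For the remaining $n$, I would invoke the construction of Krebs and Verbitsky \cite[Theorem~4.6]{krever15}, which supplies $n$-vertex graphs on which Colour Refinement keeps refining for at least $n - 8\sqrt n$ iterations. Their statement is phrased via a pair of graphs that Colour Refinement distinguishes, but only after that many rounds; the routine bridge to a bound on $\WL_1$ is that if two graphs become distinguished for the first time at iteration $k$, then Colour Refinement on their disjoint union cannot have stabilised by iteration $k-1$, since otherwise every colour class — and in particular the way it splits over the two components — would be unchanged from iteration $k-1$ to iteration $k$, contradicting that the graphs are separated only at step $k$.

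I do not anticipate any genuine obstacle: the upper bound is a one-line counting argument, and the lower bound is imported wholesale. The only points needing care are elementary bookkeeping — making sure that in any such reduction the order of the graph whose iteration number is measured matches the $n$ appearing in the $-8\sqrt n$ term — and, of course, trusting the non-trivial (but external) combinatorial construction of \cite{krever15}. It is worth recording that the gap between $n - 8\sqrt n$ and $n-1$ left open by this corollary is precisely the gap that Theorems~\ref{thm:main:infinite} and~\ref{thm:main:n-2} go on to close.
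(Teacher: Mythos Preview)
Your proposal matches the paper's approach exactly: the paper gives no separate proof for this corollary but derives it from the preceding text, citing Krebs--Verbitsky \cite[Theorem~4.6]{krever15} for the lower bound and invoking the strict-chain-of-partitions counting argument for the upper bound, just as you do. Your disjoint-union bridge is a reasonable way to spell out what the paper leaves implicit, and your caveat about matching the order~$n$ is well placed---passing to the disjoint union doubles the vertex count, so whether the constant~$8$ survives depends on exactly how the bound is phrased in \cite{krever15}; the paper, like you, does not resolve this and simply imports the stated bound.
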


It has remained open whether any of the two bounds is tight. In preliminary research conducted together with Gödicke and Schweitzer, towards improving the lower bound, the first author took up an approach to reverse-engineer the splitting of colour classes. Gödicke's implementation of those split procedures led to the following result.

\begin{theorem}[\cite{Goedicke}]\label{thm:goedicke}
For every $n \in \{1,10,11,12\}$, it holds that $\WL_1(n) = n-1$. For $n \in [2,9]$, it holds that $\WL_1(n) < n-1$.
\end{theorem}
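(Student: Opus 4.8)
The inequality $\WL_1(n)\le n-1$ is the trivial upper bound recorded above, and the case $n=1$ is immediate, since the one-vertex graph has $\WL_1=0$. For $n\in\{10,11,12\}$ it then suffices to exhibit a single long-refinement graph of order $n$ and to verify, by simulating Colour Refinement on it, that it indeed needs $n-1$ iterations; this verification is a finite check. Hence the heart of the theorem is the assertion that \emph{no} long-refinement graph of order $n$ exists for $2\le n\le 9$.

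I would first reduce the problem structurally. If $\WL_1(G)=|G|-1=:n-1$, then $\pi(\chi^0_G)$ has at most $n$ classes, each of the $n-1$ refinement steps $\pi(\chi^{i-1}_G)\succneqq\pi(\chi^i_G)$ strictly increases the number of classes, and $\pi(\chi^{n-1}_G)$ has at most $n$ classes; hence the initial colouring is monochromatic, the stable partition is discrete, and every iteration splits exactly one colour class into exactly two parts. Thus a long-refinement graph of order $n$ is precisely a monochromatic graph $G$ carrying a maximal refinement chain
\[\{V(G)\}=\pi_0\succneqq\pi_1\succneqq\dots\succneqq\pi_{n-1}=\big\{\{v\}\mid v\in V(G)\big\}\]
with $\pi_i=\pi(\chi^i_G)$ for all $i$, and the task becomes to decide, for each $n\in[2,9]$, whether such a pair $(G,(\pi_i)_i)$ exists.

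To this end I would carry out a backward ``reverse-engineering'' of the splits, in the spirit described above. Starting from the discrete partition $\pi_{n-1}$, one repeatedly guesses which two blocks of $\pi_i$ were merged to form $\pi_{i-1}$, building a finite tree of candidate chains. At every node one imposes the Colour Refinement semantics, namely that $\pi_i$ be exactly the one-step refinement of $\pi_{i-1}$: since a stable partition forces $G[P]$ to be regular and $G[P,Q]$ biregular for its classes $P,Q$ (as observed above), this amounts to biregularity conditions on the still partially determined edge set --- every block of $\pi_i$ other than the two merged ones must send a constant number of edges into each block of $\pi_{i-1}$, while the two merged blocks must disagree in their neighbourhood-colour counts with respect to $\pi_{i-1}$. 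One therefore maintains a partially committed edge set, propagates these equalities and inequalities, and prunes any branch that becomes infeasible. For $n\le 9$ the resulting tree is small enough to exhaust, and it contains no realisable chain of length $n-1$; for $n\in\{10,11,12\}$ the search instead returns a realisable branch and hence an explicit long-refinement graph. As an independent check for $n\le 9$, one may alternatively just enumerate all graphs of order at most $9$ up to isomorphism and run Colour Refinement on each, which is a modest computation.

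The step I expect to be the main obstacle is making the realisability test both sound and complete: a chain that is locally consistent --- each step looking like a legitimate one-step refinement in isolation --- need not be induced by any single graph, yet the pruning must never discard a chain that some graph does realise. Concretely, one has to solve the biregularity conditions across all pairs of colour classes simultaneously while guaranteeing that each split is caused by edges already committed rather than by edges that only enter at a coarser (i.e.\ earlier forward-time) stage, and the interaction between these two demands is where the bookkeeping is delicate. A subsidiary point, used to keep the search tree manageable, is to canonicalise partial configurations (for instance via graph canonisation) so that isomorphic partial solutions are expanded only once.
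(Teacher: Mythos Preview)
The paper does not give its own proof of this theorem: it is stated with a citation to \cite{Goedicke} and no argument follows. The surrounding text explains that the result was obtained by ``reverse-engineer[ing] the splitting of colour classes'' via G\"odicke's implementation, and the paragraph after the theorem adds that an independent fast implementation of Colour Refinement confirmed the existence of long-refinement graphs of orders $10$--$13$ by brute force.

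Your proposal is therefore not so much a different proof as a reconstruction of what the cited computational work presumably did. The backward search over refinement chains with biregularity constraints is exactly the ``reverse-engineering'' approach the paper alludes to, and your fallback---exhaustive enumeration of all graphs of order at most $9$---is the brute-force check the authors themselves report having carried out. Your structural reductions (monochromatic initial colouring, discrete stable partition, exactly one split per iteration) match the observations the paper makes explicitly in Proposition~\ref{prop:max:iterations} and Corollary~\ref{cor:two:degrees}. So there is nothing to compare: the paper treats the statement as an imported computational fact, and your sketch is a plausible outline of how such a computation would be organised. The caveat you flag about soundness and completeness of the realisability test is real, but since both the paper and the cited source ultimately rely on a finite computer search, it is a matter of implementation correctness rather than a mathematical gap.
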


Unfortunately, due to computational exhaustion, it was not possible to test for larger graph sizes. Also, the obtained graphs do not exhibit any structural properties that would lend themselves for a generalisation in order to obtain larger graphs.

Using a fast implementation of Colour Refinement, we could verify that there are exactly 16 long-refinement graphs of order 10, 24 long-refinement graphs of order 11, 32 of order 12, and 36 of order 13. However, again, with simple brute-force approaches, we could not go beyond those numbers exhaustively.

\section{Compact Representations of Long-Refinement Graphs}\label{sec:encodings}

In the light of the previous section, the question whether the lower bound obtained by Krebs and Verbitsky is asymptotically tight has remained open. With the brute-force approach, it becomes infeasible to test all graphs of orders much larger than 10 exhaustively for their number of Colour Refinement iterations until stabilisation. Still, knowing that there exist long-refinement graphs, it is natural to ask whether the ones presented in \cite{Goedicke} are exceptions or whether there are infinitely many such graphs. In this section, we show that the latter is the case.

When the input is a coloured graph with at least two vertex colours, the initial partition already has two elements. Hence, all long-refinement graphs are monochromatic. Therefore, in the following, all initial input graphs are considered to be monochromatic.

\begin{proposition}\label{prop:max:iterations}
 Let $G$ be a graph and let $n \coloneqq |G|$. If there exists an $i \in \mathbb{N}_0$ such that
 $|\{\chi^{i+1}_{G}(v) \mid v \in V(G)\}| - |\{\chi^i_{G}(v) \mid v \in V(G)\}| \geq 2$ holds, then $G$ is not a long-refinement graph.
\end{proposition}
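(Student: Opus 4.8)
The plan is to track the number of colour classes over the iterations and to show that a jump of size at least two forces Colour Refinement to stabilise ``too early'', i.e.\ after fewer than $|G|-1$ iterations. Throughout, write $c_i \coloneqq |\{\chi^i_G(v) \mid v \in V(G)\}|$ for the number of distinct colours after $i$ iterations. The first, routine, observation is that $c_i$ equals the number of parts of $\pi(\chi^i_G)$, since two vertices share their $\chi^i_G$-colour precisely when they lie in the same colour class. Together with the monotonicity $\pi(\chi^i_G) \succeq \pi(\chi^{i+1}_G)$ noted before the proposition, this gives $c_i \le c_{i+1}$, and moreover $c_{i+1} \ge c_i + 1$ whenever the refinement $\pi(\chi^i_G) \succeq \pi(\chi^{i+1}_G)$ is \emph{strict}, because a strict refinement of a finite partition splits at least one part and hence strictly increases the number of parts.

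Next I would pin down where the hypothesised jump can occur. Let $j \coloneqq \WL_1(G)$. For every $i \ge j$ the stable partition has already been reached, so $\pi(\chi^i_G) \equiv \pi(\chi^{i+1}_G)$ and hence $c_{i+1} - c_i = 0$; consequently any index $i$ witnessing $c_{i+1} - c_i \ge 2$ must satisfy $i < j$. Conversely, for $i < j$ the minimality of $j$ forces the refinement $\pi(\chi^i_G) \succeq \pi(\chi^{i+1}_G)$ to be strict, so $c_{i+1} \ge c_i + 1$ holds for each $i \in \{0,\dots,j-1\}$.

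It then remains to telescope. Let $i_0 \in \{0,\dots,j-1\}$ be an index with $c_{i_0+1} - c_{i_0} \ge 2$. Then
\[
c_j - c_0 \;=\; \sum_{i=0}^{j-1}\bigl(c_{i+1}-c_i\bigr) \;\ge\; (j-1)\cdot 1 + 2 \;=\; j+1,
\]
because every summand is at least $1$ and the summand for $i = i_0$ is at least $2$. Since $c_0 \ge 1$, this yields $c_j \ge j+2$; as $c_j$ is the number of colour classes of an $n$-vertex graph, $c_j \le n$, and therefore $\WL_1(G) = j \le n-2 < n-1$, so $G$ is not a long-refinement graph. (Note that this argument makes no assumption on the initial colouring: if $G$ is not monochromatic the bound only improves.)

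The argument is short and I do not expect a genuine obstacle. The only points requiring a little care are the two ``soft'' facts used above: that a strict refinement of a finite partition has strictly more parts, and that the colour count $c_i$ is constant for $i \ge j$, so that the hypothesised jump is necessarily located strictly before stabilisation.
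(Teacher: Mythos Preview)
Your proof is correct and follows essentially the same counting argument as the paper: both exploit that along the chain $\pi(\chi^0_G)\succneqq\cdots\succneqq\pi(\chi^j_G)$ the number of parts increases by at least one per step while the total increase is at most $n-1$, so a single step of size~$\ge 2$ forces $j\le n-2$. The paper phrases this contrapositively (in a long-refinement graph every step must increase the count by exactly one), whereas you telescope directly; the content is the same.
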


\begin{proof}
Every pair of partitions $\pi, \pi'$ with $\pi \succneqq \pi'$ satisfies $|\pi'| \geq |\pi| + 1$. Thus, every sequence of partitions of the form
\[\pi_1 \coloneqq \big\{\{v_1, \dots, v_n\}\big\} \succneqq \pi_2 \succneqq \dots \succneqq \big\{\{v_1\}, \dots \{v_n\}\big\} \eqqcolon \pi_n\] 
must satisfy $|\pi_i| = |\pi_{i-1}| + 1$ for all $i \in [2,n]$.
\end{proof}

The proposition implies that in order to find long-refinement graphs, we have to look for graphs in which, in every Colour Refinement iteration, only one additional colour class appears. That is, in each iteration, only one colour class is split and the splitting creates exactly two new colour classes. 

\begin{corollary}\label{cor:two:degrees}
Let $G$ be a long-refinement graph with at least two vertices. Then there exist $d_1, d_2 \in \mathbb{N}_0$ with $d_1 \neq d_2$ and such that $\deg(G) = \{d_1,d_2\}$.
\end{corollary}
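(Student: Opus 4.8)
The plan is to combine the fact, established just above, that every long-refinement graph is monochromatic with Proposition~\ref{prop:max:iterations} applied to the very first refinement step. Since $G$ is a long-refinement graph, its initial colouring $\chi^0_G$ is monochromatic, say with common colour $c$, so $\pi(\chi^0_G)$ is the unit partition and $|\pi(\chi^0_G)| = 1$.

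Next I would analyse the first iteration. Because all vertices share the colour $c$, for every vertex $v$ we have $\chi^1_G(v) = \big(c, \llcurly c \mid w \in N(v) \rrcurly\big)$, a value that depends only on $\deg(v)$. Hence two vertices receive the same colour after one iteration if and only if they have the same degree, and therefore $|\pi(\chi^1_G)| = |\deg(G)|$.

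Now I invoke Proposition~\ref{prop:max:iterations} with $i = 0$: if $|\pi(\chi^1_G)| - |\pi(\chi^0_G)| \ge 2$, then $G$ is not a long-refinement graph. Since $G$ \emph{is} a long-refinement graph, we get $|\deg(G)| - 1 = |\pi(\chi^1_G)| - |\pi(\chi^0_G)| \le 1$, i.e.\ $|\deg(G)| \le 2$. It then remains only to rule out $|\deg(G)| = 1$, i.e.\ that $G$ is regular. If $G$ were regular, all vertices would keep a common colour after the first iteration, so $\pi(\chi^1_G) = \pi(\chi^0_G)$ and $\WL_1(G) = 0$; but $|G| \ge 2$ forces $\WL_1(G) = |G| - 1 \ge 1$, a contradiction. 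Since $V(G) \neq \emptyset$ also gives $|\deg(G)| \ge 1$, we conclude $|\deg(G)| = 2$, so there are distinct $d_1, d_2 \in \mathbb{N}_0$ with $\deg(G) = \{d_1, d_2\}$. I do not expect any genuine obstacle here; the only points requiring care are the boundary cases — using the monochromaticity of long-refinement graphs and excluding regular graphs — while the rest is an immediate consequence of Proposition~\ref{prop:max:iterations}.
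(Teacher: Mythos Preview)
Your proof is correct and follows essentially the same approach as the paper: apply Proposition~\ref{prop:max:iterations} at $i=0$ to bound $|\deg(G)| \le 2$, and separately rule out regular graphs via $\WL_1(G)=0$. Your write-up is simply more explicit about why $\pi(\chi^1_G)$ coincides with the degree partition and about how the hypothesis $|G|\ge 2$ is used to exclude regularity.
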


\begin{proof}
This is a direct consequence of Proposition \ref{prop:max:iterations}: every (monochromatic) regular graph $G$ satisfies $\WL_1(G)= 0$ and if there were more than two vertex degrees present in $G$, we would have $|\{\chi^{1}_{G}(v) \mid v \in V(G)\}| - |\{\chi^0_{G}(v) \mid v \in V(G)\}| \geq 3 - 1 \geq 2$.
\end{proof}

We can thus restrict ourselves to graphs with exactly two vertex degrees. 

\begin{notation}
For a graph $G$ and $i \in \mathbb{N}_0$, we let $\pi^i_G$ denote the partition induced by $\chi^i_{G}$ on $V(G)$, i.e.\ after $i$ Colour Refinement iterations on $G$. If $G$ is clear from the context, we omit it in the expression.
\end{notation}

As a result of the regularity conditions that must hold for the graph $G[V_1,V_2]$, we make the following observation. It implies that, in a long-refinement graph, to determine the class $C$ that is split in iteration $i$, it suffices to consider the neighbourhood of an arbitrary class obtained in the preceding iteration.

\begin{lemma}\label{lem:not:biregular}
 Let $G$ be a graph. Suppose there are $i \in \mathbb{N}$ and $C_1,C_2,C'$ with $\pi^{i} \setminus \pi^{i-1} = \{C_1,C_2\}$ and $C' \in \pi^{i} \setminus \pi^{i+1}$. Then there are vertices $v'_1, v'_2 \in C'$ such that $|N(v'_1) \cap C_1| \neq |N(v'_2) \cap C_1|$. 
\end{lemma}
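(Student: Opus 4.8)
The plan is to argue by contradiction, exploiting the fact that in a long-refinement graph (more precisely, in a graph that behaves like one up to iteration $i$), exactly one class splits per iteration. Suppose for contradiction that for every vertex $v' \in C'$ the value $|N(v') \cap C_1|$ is the same, say equal to some constant $a$. First I would observe that, since $C'$ is a single class of $\pi^i$ that gets split to produce $\pi^{i+1}$, the colour $\chi^{i+1}_G(v')$ of a vertex $v' \in C'$ is determined by the multiset $\llcurly \chi^i_G(w) \mid w \in N(v') \rrcurly$, and hence by the tuple of numbers $(|N(v') \cap D|)_{D \in \pi^i}$. So the split of $C'$ in iteration $i+1$ must be witnessed by at least one class $D \in \pi^i$ with $|N(v'_1) \cap D| \neq |N(v'_2) \cap D|$ for some $v'_1, v'_2 \in C'$.

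Next I would use the hypothesis $\pi^i \setminus \pi^{i-1} = \{C_1, C_2\}$: every class $D \in \pi^i$ other than $C_1$ and $C_2$ already belongs to $\pi^{i-1}$, and $C_1 \cup C_2$ is itself a class of $\pi^{i-1}$. Since $C'$ is not split by iteration $i$ (it lies in $\pi^i$), and more importantly since every class of $\pi^{i-1}$ restricted to $C'$ is constant as a function of the vertex in $C'$... actually the cleaner route: for any $D \in \pi^{i-1}$, all vertices of $C'$ have the same number of neighbours in $D$, because $C'$ is contained in a single class of $\pi^{i-1}$ and that class is stable at level $i$ with respect to $\pi^{i-1}$ (this is just the definition of $\chi^i$ being constant on $C'$, applied to the ``old'' classes). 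Hence the only classes $D \in \pi^i$ that can possibly witness the split of $C'$ are $C_1$ and $C_2$, since $D = C_1$, $D = C_2$ are the only classes of $\pi^i$ not already in $\pi^{i-1}$.

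Now I would put the two pieces together. By the assumption toward a contradiction, $|N(v') \cap C_1|$ is constant on $C'$, so $C_1$ does not witness the split. Therefore $C_2$ must witness it: there are $v'_1, v'_2 \in C'$ with $|N(v'_1) \cap C_2| \neq |N(v'_2) \cap C_2|$. But $C_1 \cup C_2$ is a class of $\pi^{i-1}$, so for every vertex $v' \in C'$ we have $|N(v') \cap C_1| + |N(v') \cap C_2| = |N(v') \cap (C_1 \cup C_2)|$, and the right-hand side is constant on $C'$ (again because $C_1 \cup C_2 \in \pi^{i-1}$ and $\chi^i$ is constant on $C'$). Combined with $|N(v') \cap C_1|$ being constant on $C'$, this forces $|N(v') \cap C_2|$ to be constant on $C'$ as well, contradicting the previous sentence. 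Hence the assumption was false and there must exist $v'_1, v'_2 \in C'$ with $|N(v'_1) \cap C_1| \neq |N(v'_2) \cap C_1|$.

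The main obstacle, and the step I would be most careful about, is the claim that for every $D \in \pi^{i-1}$ the quantity $|N(v') \cap D|$ is constant as $v'$ ranges over $C'$; this needs the right unwinding of Definition~\ref{def:k:dimensional:weisfeiler:lehman:refinement}, namely that $\chi^i_G$ is constant on $C'$ (since $C' \in \pi^i$) and that $\chi^i_G(v')$ encodes, for each colour $c$ appearing in $\chi^{i-1}_G$, the number of neighbours $w$ of $v'$ with $\chi^{i-1}_G(w) = c$. I should also note the trivial but necessary facts that $C_1, C_2 \subseteq V(G)$ are genuine classes of $\pi^i$ (so they and $C_1 \cup C_2 \in \pi^{i-1}$ are well-defined), and that $C'$ being split at iteration $i+1$ indeed means, via Proposition~\ref{prop:max:iterations}-style reasoning, that two vertices of $C'$ receive different $\chi^{i+1}_G$-colours, which is exactly what licenses picking the witnessing class $D$.
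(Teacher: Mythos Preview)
Your proof is correct and follows essentially the same approach as the paper's: both use that $|N(v') \cap D|$ is constant on $C'$ for every $D \in \pi^{i-1}$ (in particular for $D = C_1 \cup C_2$), so the split of $C'$ must be witnessed by $C_1$ or $C_2$, and the identity $|N(v') \cap C_1| + |N(v') \cap C_2| = |N(v') \cap (C_1 \cup C_2)|$ then forces $C_1$ to witness it. The only cosmetic difference is that the paper argues directly by a two-case disjunction while you frame it as a contradiction.
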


\begin{proof}
Note that there must be a $C \in \pi^{i-1} \setminus \pi^i$ with $C_1 \cup C_2 = C$. Since $C' \in \pi^i$ and $C \in \pi^{i-1}$, there is a $d \in \mathbb{N}_0$ such that for every $v \in C'$, it holds that $d = |N(v) \cap C|$. Since $\{C_1,C_2\} = \pi^{i} \setminus \pi^{i-1}$ and $C' \notin \pi^{i+1}$, there are vertices $v'_1, v'_2 \in C'$ such that $|N(v'_1) \cap C_1| \neq |N(v'_2) \cap C_1|$ or $|N(v'_1) \cap C_2| \neq |N(v'_2) \cap C_2|$. In the first case, we are done. In the second case, we obtain $|N(v'_1) \cap C_1| = d - |N(v'_1) \cap C_2| \neq d - |N(v'_2) \cap C_2| = |N(v'_2) \cap C_1|$.
\end{proof}

Note that the validity of the lemma depends on the assumption $\{C_1,C_2\} = \pi^{i} \setminus \pi^{i-1}$, which by Proposition \ref{prop:max:iterations} is always fulfilled in long-refinement graphs as long as $\pi^{i-1} \not\equiv \pi^{i}$.

\begin{corollary}
No graph with more than one connected component is a long-refinement graph.
\end{corollary}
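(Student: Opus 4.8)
The claim is that no disconnected graph can be a long-refinement graph, and the natural strategy is to exploit the fact established in Corollary~\ref{cor:two:degrees}: a long-refinement graph $G$ on $n \geq 2$ vertices has exactly two degrees $d_1 \neq d_2$ present. First I would handle the degenerate possibilities. If $G$ has an isolated vertex (degree $0$), then since some other degree must occur, there is a component with an edge; but already in iteration~$1$ the isolated vertices split off, and the rest of the graph refines independently — I would argue this forces at least two colour classes to appear in some iteration, contradicting Proposition~\ref{prop:max:iterations}. More cleanly, the plan is to observe that Colour Refinement acts \emph{componentwise}: the colour $\chi^i_G(v)$ depends only on the connected component of $v$, so $\chi^i_G$ restricted to a component $H$ equals $\chi^i_H$ up to renaming, and hence $\WL_1(G) = \max_{H} \WL_1'(H)$ where the max is over components and $\WL_1'$ counts iterations relative to $H$'s own initial monochromatic colouring — except that distinct components may still receive distinct initial colours once we view them inside $G$, if their vertex degrees already differ.

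The key structural point I would extract is this: suppose $G = H_1 \sqcup \dots \sqcup H_k$ with $k \geq 2$. Consider the very first iteration. Because $G$ is monochromatic initially and has exactly two degrees $d_1, d_2$, after iteration~$1$ there are exactly two colour classes, namely the degree-$d_1$ vertices and the degree-$d_2$ vertices. For the refinement to continue one class at a time, by Lemma~\ref{lem:not:biregular} and the discipline forced by Proposition~\ref{prop:max:iterations}, in every subsequent iteration exactly one class splits into exactly two. Now I would count: over the components, each $H_j$ is a coloured graph whose initial colouring is inherited from $\pi^1_G$ (it is either monochromatic, all-$d_1$ or all-$d_2$, or bichromatic if $H_j$ itself contains both degrees). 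The total number of colour classes in $\pi^i_G$ is the sum over $j$ of the number of classes in $H_j$ at step $i$. For $G$ to be long-refinement we need $|\pi^i_G| = i+1$ for $i = 1, \dots, n-1$, in particular $|\pi^1_G| = 2$; but if two components $H_a, H_b$ each already have an edge and the same degree pattern, or more to the point, once any component $H_j$ reaches its stable partition while another is still refining, the number of new classes per iteration coming from the still-active components can exceed one, OR a component becomes entirely discrete too early — the real obstruction is a \emph{parity/counting} mismatch: the sum $\sum_j (\text{classes in }H_j)$ cannot increase by exactly one at every single step unless all but one component is already at its discrete (or stable-and-equal-to-discrete) partition from iteration~$1$ onward, which for a component with $\geq 2$ vertices is impossible since it starts with at most two classes.

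More precisely, here is the argument I would write out. Fix a long-refinement $G$ with components $H_1, \ldots, H_k$, $k \geq 2$, and let $c_j(i)$ be the number of colour classes of $\pi^i_G$ lying inside $H_j$, so $\sum_j c_j(i) = i+1$ and each $c_j$ is non-decreasing. At $i=0$ we have $\sum_j c_j(0) = 1$, impossible for $k \geq 2$ unless — wait, $\pi^0_G$ is the unit partition with one class, so $c_j(0)$ is not well-defined as "classes inside $H_j$"; instead I should start the count at $i = n-1$, where $\pi^{n-1}_G$ is discrete, giving $c_j(n-1) = |H_j|$, and run the telescoping downward, or better, start at $i=1$. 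At $i=1$: $\sum_j c_j(1) = 2$, so with $k \geq 2$ components each contributing $c_j(1) \geq 1$, we get $k = 2$ and $c_1(1) = c_2(1) = 1$, i.e.\ both components are still monochromatic after one iteration, meaning each $H_j$ is regular — but then each $H_j$ has a single degree, and since $\deg(G) = \{d_1,d_2\}$ with $d_1 \neq d_2$, WLOG $H_1$ is $d_1$-regular and $H_2$ is $d_2$-regular. The punchline: a disjoint union of two regular graphs of different degrees reaches its stable partition already at iteration~$1$ (the two degree classes are each regular and the bipartite graph between them is the empty graph, hence biregular), so $\WL_1(G) = 1 < n - 1$ as soon as $n \geq 3$; and $n = 2$ is excluded since the only $2$-vertex graphs are $K_2$ and $\overline{K_2}$, both regular, neither having two distinct degrees.

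\textbf{Main obstacle.} The only subtlety is the bookkeeping in the step "$\sum_j c_j(1) = 2$ with $k \geq 2$ forces $k = 2$ and each component monochromatic": I must make sure I have correctly invoked Proposition~\ref{prop:max:iterations} to know $|\pi^1_G| = 2$ (not more), which uses that $G$ is long-refinement, and I must confirm that "monochromatic component after one iteration $\Rightarrow$ regular" — this holds because within a component, after iteration~$1$ two vertices share a colour iff they have the same degree, so a single colour class means a single degree. After that the finish is immediate from the remark in the text that stable partitions have each $G[P]$ regular and each $G[P,Q]$ biregular. I expect this to be a short proof, perhaps four or five sentences once the componentwise-refinement observation is stated.
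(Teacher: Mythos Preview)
You actually state the paper's argument in your second paragraph --- Colour Refinement acts componentwise, $\chi^i_G$ restricted to a component $H$ equals $\chi^i_H$, hence $\WL_1(G) = \max_H \WL_1(H)$ --- and that one line is the entire proof in the paper: each component has fewer than $|G|$ vertices, so the maximum is below $|G|-1$. But you then abandon this for a class-counting argument, and that argument has a genuine gap. You set $c_j(i)$ to be the number of colour classes of $\pi^i_G$ lying inside $H_j$, assert $\sum_j c_j(1) = 2$ together with $c_j(1) \geq 1$ for each $j$, and conclude $k = 2$ with each component monochromatic after one iteration, hence regular. But the two classes of $\pi^1_G$ are the degree classes $V_{d_1}$ and $V_{d_2}$, and nothing prevents each of them from meeting several components. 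If $G = H_1 \sqcup H_2$ with each $H_j$ a path on four vertices, then $|\pi^1_G| = 2$, yet neither degree class is contained in a single component and neither component is regular. Your deduction uses only the hypothesis $|\pi^1_G| = 2$ (which holds for any graph with exactly two vertex degrees), so as written it would show that every disconnected graph with two degrees has all components regular --- plainly false. The identity $\sum_j c_j(i) = |\pi^i_G|$ you rely on holds only when no colour class straddles two components, and that is exactly what fails at step~$1$; you even flag the analogous problem at $i=0$ but then proceed without resolving it.

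The fix is to trust your first instinct and stop there: from $\chi^i_G|_H = \chi^i_H$ one gets $\WL_1(G) = \max_H \WL_1(H) \leq \max_H(|H|-1) < |G|-1$ directly, with no per-component class bookkeeping needed.
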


\begin{proof}
Since the refinement process takes place in parallel in each connected component, $\WL_1(G)$ is the maximum of all $WL_1(H)$ for the connected components $H$ of $G$.
\end{proof}

We can therefore restrict ourselves to connected graphs. The only connected graphs $G$ with $\deg(G) = \{1,2\}$ are paths and, by Fact \ref{fact:paths}, they are not long-refinement graphs. Thus, the smallest degree pairs for a search for candidates are $\{1,3\}$ and $\{2,3\}$.

\begin{lemma}\label{lem:degree:reduction}
Let $G$ be a long-refinement graph. Then $|\{v \in V(G) \mid \deg(v) = 1\}| \leq 2$. 
\end{lemma}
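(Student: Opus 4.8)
The plan is to use Corollary~\ref{cor:two:degrees}, which tells us that a long-refinement graph $G$ on at least two vertices has exactly two vertex degrees $d_1 < d_2$, and to argue that if $d_1 = 1$ then there can be at most two degree-$1$ vertices. Suppose for contradiction that $G$ is a long-refinement graph with at least three vertices of degree $1$; call the set of these vertices $D$, so $|D| \geq 3$, and note $\deg(G) = \{1, d_2\}$ with $d_2 \geq 2$. The key point is to track what Colour Refinement does in the first two iterations. After iteration~$1$, the degree-$1$ vertices form one colour class (they all see a single neighbour, whose colour is still the initial monochromatic colour) and the degree-$d_2$ vertices form another; by Proposition~\ref{prop:max:iterations} these are the only two classes, so $\pi^1$ has exactly two classes, $D$ and $V(G)\setminus D$.

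Now consider iteration~$2$. The class that splits must be $V(G) \setminus D$ (the class $D$ cannot split, since every vertex of $D$ has exactly one neighbour and that neighbour lies in $V(G)\setminus D$, so all vertices of $D$ still receive identical colours). After iteration~$2$, the refined colour of a vertex $v \in V(G)\setminus D$ is determined by how many of its neighbours lie in $D$ versus in $V(G)\setminus D$, i.e.\ by $|N(v) \cap D|$. By Proposition~\ref{prop:max:iterations}, this invariant takes exactly two values over $V(G)\setminus D$. I would then push this one more step (or argue directly): because $|D|\ge 3$ and each vertex of $D$ sends its unique edge into $V(G)\setminus D$, the edges from $D$ are distributed among the vertices of $V(G)\setminus D$, and I want to derive that some later iteration produces a split creating more than two classes, or that the graph becomes stable too early, contradicting $\WL_1(G) = |G|-1$. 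The cleanest route is probably: let $C_1, C_2$ be the two classes of $V(G)\setminus D$ after iteration~$2$, distinguished by the values $a_1 \ne a_2$ of $|N(\cdot)\cap D|$; track the subsequent refinement and show that the three "layers" $D$, $C_1$, $C_2$, together with the biregularity constraints forced by Lemma~\ref{lem:not:biregular} and the stable-graph structure (each $G[P,Q]$ biregular, each $G[P]$ regular), cannot be arranged so that exactly one new class appears in every one of the remaining $|G|-3$ iterations.

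The main obstacle I expect is making the last step fully rigorous rather than plausible: one has to rule out, in a uniform way, every configuration of a long-refinement graph with $\geq 3$ leaves, and the degree-$1$ vertices interact with the refinement only through their single private neighbour, so the argument has to be about the structure of $G - D$ together with the multiset of attachment points. A promising concrete tactic: show that in a long-refinement graph every colour class of $\pi^1$ other than singletons must eventually be split, and that the class $D$ of leaves is either a singleton, a pair, or forces an early "double split". Indeed, if $|D| \geq 2$, the vertices of $D$ never get separated from each other by any number of Colour Refinement iterations (they always have the same colour: a leaf's colour after iteration $i$ depends only on the colour sequence of its unique neighbour, and two leaves attached to vertices that are never separated — which, if their neighbours were separated at some point, would itself have been a split of a class containing... ), so $D$ collapses to at most one singleton in the stable partition; since the stable partition of a long-refinement graph is discrete, this forces $|D| \leq 1$ unless the two leaves have distinct neighbours that get separated, and a careful count of when that separation can happen without violating Proposition~\ref{prop:max:iterations} caps $|D|$ at $2$. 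I would finish by checking the boundary case $|G| = 2$ (the single edge, $\WL_1 = 0$, vacuous) and noting the bound $\le 2$ is attained, e.g.\ by a path, consistent with Fact~\ref{fact:paths}.
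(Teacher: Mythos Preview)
Your setup is correct, but the argument has a real gap: you never isolate the one structural fact that makes the proof work, and the attempts to ``push forward from iteration~2'' or to cap $|D|$ by a ``careful count'' stay at the level of intention. The paper's proof rests on a single observation you are circling but not landing on:

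\emph{The split of the leaf class $D$ is necessarily the final split of the entire refinement.}

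Concretely: $D$ remains a single class until the first iteration $j$ at which the neighbour set $N(D)$ meets two distinct classes $W_1,W_2\in\pi^j$. At iteration $j+1$ the class $D$ splits into $A=N(W_1)\cap D$ and $B=N(W_2)\cap D$, and by Proposition~\ref{prop:max:iterations} nothing else splits in that step. The crucial point is that $\pi^{j+1}$ is already equitable. Every class $C\in\pi^j$ with $C\neq D$ is biregular with every other such class (otherwise it would also split at step $j+1$), and each such $C$ is automatically biregular with $A$ and with $B$: a leaf in $A$ has its unique neighbour in $W_1$, a leaf in $B$ in $W_2$, and every vertex of $W_1$ has \emph{all} of its $D$-neighbours in $A$ (and symmetrically for $W_2$); classes disjoint from $N(D)$ see neither $A$ nor $B$. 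Hence Colour Refinement stabilises at iteration $j+1$. For a long-refinement graph this forces $j+1=n-1$; since the stable partition is discrete, $A$ and $B$ are singletons and $|D|=2$.

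Your forward-tracking from $\pi^2$ never reaches this; the right move is to jump to the iteration at which $D$ finally splits rather than to analyse the first few layers. Two smaller issues: your parenthetical that ``the vertices of $D$ never get separated'' is false --- they do get separated, just at the very last step --- and your closing remark that paths show the bound $\le 2$ is attained is off, since paths are not long-refinement graphs by Fact~\ref{fact:paths}.
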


\begin{proof}
Suppose the lemma does not hold. Let $G$ be a long-refinement graph with at least three vertices of degree 1. Consider the execution of Colour Refinement on input $G$ and let $n \coloneqq |G|$. In $\pi^1$, there are two vertex colour classes, namely a class $V_1$ containing the vertices of degree 1 and a class $V_d$ containing the vertices of the second vertex degree $d \neq 1$. 

Suppose that $|V_1| \geq 2$. The class $V_1$ is not split before $N(V_1)$ has been split. Thus, consider the iteration $j$ after which $N(V_1)$ has been subdivided into two classes $W_1$ and $W_2$. This induces the splitting of $V_1$ into $N(W_1) \cap V_1$ and $N(W_2) \cap V_1$, which by Proposition \ref{prop:max:iterations} implies in particular that for all pairs of partition classes $C, C' \in \pi^{j}$ with $C \cap V_1 = \emptyset = C' \cap V_1$, the graph induced between the two classes is biregular. Therefore, however, now for every pair of classes $C, C' \in \pi^{j+1}$, the graph $G[C,C']$ is biregular and thus, the partition is equitable. Hence, $j = n-2$, i.e.\ the splitting of $V_1$ must happen in the $(n-1)$-st iteration. In particular, $N(W_1) \cap V_1$ and $N(W_2) \cap V_1$ must be singletons, i.e.\ $|V_1| = 2$. 
\end{proof}

\begin{table}[ht]
  \caption{Adjacency lists of long-refinement graphs $G$ with $\deg(G) = \{1,5\}$ (left) and $\deg(G) = \{1,3\}$ (right), respectively.}\label{t1}
  \vspace{-1ex}
  \begin{center}
    \begin{tabular}[t]{|r|l|}
      \hline 
      \rule[-1ex]{0pt}{2.5ex} $v$ & $N(v)$ \\ 
      \hlinewd{1.5pt} 
      \rule[-1ex]{0pt}{2.5ex}~0 &~1 \\ 
      \hline 
      \rule[-1ex]{0pt}{2.5ex} 1 &~0,2,3,4,5 \\ 
      \hline 
      \rule[-1ex]{0pt}{2.5ex} 2 &~1,3,5,7,10 \\ 
      \hline 
      \rule[-1ex]{0pt}{2.5ex} 3 &~1,2,4,6,10 \\ 
      \hline 
      \rule[-1ex]{0pt}{2.5ex} 4 &~1,3,5,9,11 \\ 
      \hline 
      \rule[-1ex]{0pt}{2.5ex} 5 &~1,2,4,8,11 \\ 
      \hline 
      \end{tabular}
      \hspace{1ex}
      \begin{tabular}[t]{|r|l|}
      \hline 
      \rule[-1ex]{0pt}{2.5ex} $v$ & $N(v)$ \\ 
      \hlinewd{1.5pt} 
      \rule[-1ex]{0pt}{2.5ex} 6 &~3,7,8,9,11  \\ 
      \hline 
      \rule[-1ex]{0pt}{2.5ex} 7 &~2,6,8,9,10  \\ 
      \hline 
      \rule[-1ex]{0pt}{2.5ex} 8 &~5,6,7,10,11  \\ 
      \hline 
      \rule[-1ex]{0pt}{2.5ex} 9 &~4,6,7,10,11  \\ 
      \hline 
      \rule[-1ex]{0pt}{2.5ex} 10 &~2,3,7,8,9  \\ 
      \hline 
      \rule[-1ex]{0pt}{2.5ex} 11 &~4,5,6,8,9  \\ 
      \hline 
    \end{tabular}
\hspace{1.8cm}
    \begin{tabular}[t]{|r|l|}
      \hline 
      \rule[-1ex]{0pt}{2.5ex} $v$ & $N(v)$ \\ 
      \hlinewd{1.5pt} 
      \rule[-1ex]{0pt}{2.5ex}~0 &~1 \\ 
      \hline 
      \rule[-1ex]{0pt}{2.5ex} 1 &~0,2,3 \\ 
      \hline 
      \rule[-1ex]{0pt}{2.5ex} 2 &~1,11,13 \\ 
      \hline 
      \rule[-1ex]{0pt}{2.5ex} 3 &~1,10,12 \\ 
      \hline 
      \rule[-1ex]{0pt}{2.5ex} 4 &~5,7,10 \\ 
      \hline 
      \rule[-1ex]{0pt}{2.5ex} 5 &~4,6,10 \\ 
      \hline 
      \rule[-1ex]{0pt}{2.5ex} 6 &~5,9,11  \\ 
      \hline 
      \end{tabular}
      \hspace{1ex}
      \begin{tabular}[t]{|r|l|}
      \hline 
      \rule[-1ex]{0pt}{2.5ex} $v$ & $N(v)$ \\ 
      \hlinewd{1.5pt}       
      \rule[-1ex]{0pt}{2.5ex} 7 &~4,8,11  \\ 
      \hline 
      \rule[-1ex]{0pt}{2.5ex} 8 &~7,9,13  \\ 
      \hline 
      \rule[-1ex]{0pt}{2.5ex} 9 &~6,8,12  \\ 
      \hline 
      \rule[-1ex]{0pt}{2.5ex} 10 &~3,4,5  \\ 
      \hline 
      \rule[-1ex]{0pt}{2.5ex} 11 &~2,6,7  \\ 
      \hline
      \rule[-1ex]{0pt}{2.5ex} 12 &~3,9,13  \\ 
      \hline 
      \rule[-1ex]{0pt}{2.5ex} 13 &~2,8,12  \\ 
      \hline  
    \end{tabular}
  \end{center}
\end{table}

Table \ref{t1} displays the adjacency lists of two long-refinement graphs on 12 and 14 vertices, respectively, which each have exactly one vertex of degree 1.

The lemma allows us to reduce the decision problem whether there are infinitely many long-refinement graph with degrees in $\{1,2,3\}$ to the question whether there are such families with degrees in $\{2,3\}$. 

\begin{corollary}
If there is a long-refinement graph $G$ with $\deg(G) = \{1,3\}$, then there is also a long-refinement graph $\hat G$ with $\deg(\hat G) = \{2,3\}$ and $|\hat G| \in \{|G|-1, |G|\}$.
\end{corollary}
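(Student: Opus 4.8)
The plan is to take a long-refinement graph $G$ with $\deg(G) = \{1,3\}$ and transform it into a graph $\hat G$ with $\deg(\hat G) = \{2,3\}$ by ``absorbing'' the degree-$1$ vertices while preserving the entire sequence of Colour Refinement partitions (up to the obvious shift in the number of vertices). By Lemma~\ref{lem:degree:reduction}, $G$ has either one or two vertices of degree $1$, and these two cases will be handled slightly differently, yielding $|\hat G| = |G|$ and $|\hat G| = |G| - 1$ respectively.

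First I would treat the case of exactly one degree-$1$ vertex $u$, with its unique neighbour $w$ (of degree $3$). The idea is to delete $u$ and instead attach a pendant path of length $2$ to $w$? No --- that reintroduces degree $1$. Instead, the right move is to replace the pendant vertex $u$ by identifying it structurally with a small gadget that has degree $2$: concretely, delete $u$ and add a new vertex $u'$ adjacent to $w$ and to one further new vertex, arranged so that $w$ retains degree $3$ and every new vertex has degree $2$. A cleaner realisation: since in $G$ the vertex $u$ is the unique degree-$1$ vertex, in $\pi^1_G$ it forms its own singleton colour class already; I would instead subdivide the edge $\{u,w\}$ once, making $u$ have degree $1$ still --- so that does not work either. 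The correct construction, and the one I expect the authors use, is to take two disjoint copies would break connectivity; rather, one replaces the pendant edge by a pendant triangle minus an edge, i.e.\ attach to $w$ two new vertices $a,b$ with edges $\{w,a\},\{a,b\},\{b,w\}$ deleting the original $u$: but then $\deg(w) = 4$. So instead one deletes $u$ and one of the edges at $w$ to a degree-$3$ neighbour, and reroutes; this is getting delicate, which signals that the bookkeeping of degrees is precisely the main obstacle.

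Let me restate the plan at the level it should be carried out. The key observation to exploit is that the colour of the pendant vertex $u$ in every iteration is determined by (and determines) the colour of $w$, so $u$ contributes exactly one ``extra'' singleton that is split off at iteration $1$ and then stays inert; meanwhile $w$ behaves in $G$ exactly as a degree-$3$ vertex that happens to have one neighbour which is a ``marker.'' The construction should therefore replace the marker $u$ by a constant-size degree-$2$ gadget $H$ attached to $w$ in such a way that (i) the gadget is rigidly asymmetric under Colour Refinement (each of its few vertices gets split off in the first few iterations, contributing no fewer splits than $u$ did), (ii) $w$ keeps degree exactly $3$, and (iii) the rest of $G$ is untouched, so the refinement dynamics on $V(G)\setminus\{u\}$ is unchanged. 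One checks that with a two- or three-vertex gadget the total vertex count changes by $\pm 1$ or $0$, and that no two colour classes merge their splitting schedule, so the $|\hat G|-1$ bound on iterations is preserved; when $G$ has two degree-$1$ vertices one uses a single shared gadget, saving a vertex and giving $|\hat G| = |G|-1$.

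The main obstacle, concretely, is verifying that attaching the gadget does not accidentally speed up refinement: one must confirm that each gadget vertex gets its own colour class by iteration at most $k$ while simultaneously the split of $C \setminus (\text{gadget})$ classes is not triggered earlier than in $G$, and that the gadget vertices are split one-at-a-time (never two new classes in one iteration), so that Proposition~\ref{prop:max:iterations} is not violated. I would handle this by an explicit description of the small gadget together with a direct computation of $\chi^0,\chi^1,\chi^2,\ldots$ restricted to the gadget and its attachment vertex $w$, then appeal to the fact that the refinement on the untouched part of $G$ proceeds identically because $w$'s multiset of neighbour-colours differs from its multiset in $G$ only by the replacement of the single marker colour, which remains a singleton throughout; a short induction on $i$ shows $\pi^i_{\hat G}$ restricted to $V(G)\setminus\{u\}$ equals $\pi^i_G$ restricted there, for every $i$, which gives $\WL_1(\hat G) \geq \WL_1(G) - O(1) + O(1) = |\hat G| - 1$ after adjusting the additive constants coming from the gadget size and the change in $|V|$.
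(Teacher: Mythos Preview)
Your proposal has a genuine gap: you never land on a concrete construction, and the gadget approach you sketch is both unnecessary and unverified. The paper's argument is far simpler, and you actually brush past the right ideas without recognising them.

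By Lemma~\ref{lem:degree:reduction} you correctly note that $|V_1| \in \{1,2\}$. The two cases are then handled as follows.

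\emph{Case $|V_1| = 2$.} Simply insert the edge between the two degree-$1$ vertices. Both now have degree $2$; every other vertex keeps degree $3$. Since $\hat G - V_1 = G - V_1$, the bipartite piece $\hat G[V_1, N(V_1)] = G[V_1, N(V_1)]$, and $\hat G[V_1]$ is regular, one checks directly that $\hat\pi^i = \pi^i$ for all $i$, so $\WL_1(\hat G) = \WL_1(G)$ and $|\hat G| = |G|$.

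\emph{Case $|V_1| = 1$.} Simply delete the single degree-$1$ vertex. Its unique neighbour drops to degree $2$; everyone else stays at degree $3$. Then $\hat\pi^1 = \pi^2 \setminus \{V_1\}$ and more generally $\hat\pi^i = \pi^{i+1} \setminus \{V_1\}$, giving $\WL_1(\hat G) = (n-1)-1 = |\hat G|-1$.

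Your various attempts (subdividing edges, attaching triangles, unspecified gadgets) all run into the degree bookkeeping problems you describe precisely because they add structure; the point is that no gadget is needed at all. You also have the vertex counts swapped: it is the two-pendant case that keeps $|\hat G| = |G|$ and the one-pendant case that gives $|\hat G| = |G|-1$, not the other way round. Finally, your closing inequality $\WL_1(\hat G) \geq \WL_1(G) - O(1) + O(1)$ is not strong enough even in principle: to conclude that $\hat G$ is a long-refinement graph you need the exact equality $\WL_1(\hat G) = |\hat G| - 1$, and an unspecified additive constant cannot deliver that.
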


\begin{proof}
Let $G$ be a long-refinement graph with $\deg(G) = \{1,3\}$. Then $\pi^1 = \{V_1,V_3\}$, where $V_1 = \{v \in V(G) \mid \deg(v) = 1\}$ and $V_3 = \{v \in V(G) \mid \deg(v) = 3\}$. By Lemma \ref{lem:degree:reduction}, it holds that $|V_1| \in \{1,2\}$.

First suppose $|V_1| = 2$. Consider the graph $\hat G$ with $V(\hat G) = V(G)$ and $E(\hat G) = E(G) \cup \{V_1\}$, i.e.\ obtained from $G$ by inserting an edge between the two vertices in $V_1$. In the following, we identify the vertices of $\hat G$ with their counterparts in $G$. For $i \in \mathbb{N}_0$, let $\hat\pi^i$ be the partition of $V(\hat G)$ induced by $\chi^i_{\hat G}$. Let $n \coloneqq |G|$. Then, for $i \in [0,n-1]$, it holds that 
\[\hat\pi^i = \pi^i.
\] 
This follows from $\hat G - V_1 = G - V_1$ and $\hat G[V_1, N(V_1)] = G[V_1,N(V_1)]$, the regularity of $\hat G[V_1]$ and that there is only one way to split $V_1$, which results in two singletons. In particular, it holds that $\WL_1(\hat G) = \WL_1(G) = |\hat G|-1$.  

Now suppose $|V_1| = 1$. In $\pi^1$, there are only the two partition classes $V_1$ and $V_3$. In $\pi^2$, the set $V_3$ is subdivided into the singleton $N(V_1)$ and $V_3 \setminus N(V_1)$. Define $\hat G \coloneqq G - V_1$ and again, for $i \in \mathbb{N}_0$, let $\hat\pi^i$ be the partition of $V(\hat G)$ induced by $\chi^i_{\hat G}$. Then $\hat \pi^1 = \{N(V_1), V_3 \setminus N(V_1)\} = \pi^2 \setminus \{V_1\}$ and, more generally, for $i \in \mathbb{N}$, we obtain $\hat \pi^i = \pi^{i+1} \setminus \{V_1\}$. This can be deduced from the equality $\hat G - V_1 = G - V_1$. Thus, $\WL_1(\hat G) = (n-1)-1 = |\hat G| - 1$.
\end{proof}

With the help of the tool \texttt{Nauty} \cite{mck81}, our quest for long-refinement graphs was successful. We tested exhaustively up to order 13. To render the search for larger long-refinement graphs tractable, we imposed further conditions. Restricting the degrees to $\{2,3\}$, it was possible to test for graphs up to order 64. Altogether, we found graphs $G$ with $n-1$ Colour Refinement iterations, where $n = |G|$, for all even $n \in [10,64] \setminus \{24,30,42,48,60\}$ and for all odd $n \in [11,63] \setminus \{21,27,39,45,57,63\}$.\footnote{We exclude the case $n=10$ in the following analysis since, as our computational results have shown, although long-refinement graphs of order 10 do exist, none of them has vertex degrees 2 and 3.}

In the following, in order to generalise the results to bigger graph sizes, we analyse the obtained graphs. Among our computational results, the even-size graphs $G$ with vertex degrees 2 and 3 have the following property in common: there is an iteration $j$ such that for every $C \in \pi^j_{G}$, it holds that $|C| = 2$. That is, with respect to their assigned colours, the vertices remain in pairs until there are no larger colour classes left. Then the first such pair is split into singletons, which must induce a splitting of another pair, and so on, until the discrete partition is obtained. (Similar statements hold for the odd-size graphs, but are more technical.) In the following, a \emph{pair} is a set of two vertices which occurs as a colour class during the execution of Colour Refinement. That is, vertices $v,v'$ form a pair if and only if $\{v,v'\}$ is an element of $\pi^i$ for some $i \in \mathbb{N}_0$.

As just argued, there is a splitting order on the pairs, i.e.\ a linear order $\prec$ induced by the order in which pairs are split into singletons. We now examine the possible connections between pairs. 

From now on, we make the following assumption.

\begin{assumption}\label{ass:max:iterations}
$G$ is a long-refinement graph with $\deg(G) = \{2,3\}$ and such that there is an $i \in \mathbb{N}_0$ for which $\pi^i$ contains only pairs. Let $\prec$ be the splitting order of these pairs.
\end{assumption}

We call pairs $P_1,P_2 \subseteq V(G)$ \emph{successive} if $P_2$ is the successor of $P_1$ with respect to $\prec$. Note that for successive pairs $P_1$, $P_2$, in the graph $G[P_1, P_2]$, every $v_2 \in P_2$ must have the same number of neighbours in $P_1$, otherwise it would hold that $P_2 \prec P_1$. By a simple case analysis, together with an application of Lemma \ref{lem:not:biregular}, this rules out all connections but matchings for successive pairs.

\begin{corollary}\label{cor:succ:pairs:match}
 Let $P_1$ and $P_2$ be successive pairs. Then $G[P_1,P_2]$ is a matching.
\end{corollary}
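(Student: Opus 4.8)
The plan is to argue directly from the structure forced by the splitting order $\prec$ together with Lemma~\ref{lem:not:biregular}. Let $P_1 = \{a,b\}$ and $P_2 = \{c,d\}$ be successive pairs, so $P_2$ is split in the iteration immediately after $P_1$. First I would record the key consequence of $P_2$ not being split before $P_1$: in the graph $G[P_1,P_2]$, the two vertices $c,d$ of $P_2$ must have the \emph{same} number of neighbours in $P_1$ (if they had different numbers, then already $\pi^{i-1}$ would distinguish $c$ from $d$, forcing $P_2 \prec P_1$, a contradiction). Dually, because $P_1$ \emph{is} split in the iteration whose refinement is triggered by the splitting of $P_2$ into singletons, Lemma~\ref{lem:not:biregular} (applied with $C_1,C_2$ the two singletons coming from $P_2$ and $C' = P_1$) guarantees that $a$ and $b$ have \emph{different} numbers of neighbours among the vertices of $P_2$; equivalently $|N(a) \cap P_2| \neq |N(b) \cap P_2|$.

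Next I would combine these two facts with a finite case analysis on the possible bipartite graphs on $(\{a,b\},\{c,d\})$, i.e.\ on the subsets of the four potential edges $ac, ad, bc, bd$. The constraint ``$c$ and $d$ have equal degree into $P_1$'' leaves only a small list of candidates: the empty graph, the two perfect matchings ($\{ac,bd\}$ and $\{ad,bc\}$), the ``double edge from one side'' graphs like $\{ac,ad\}$ or $\{bc,bd\}$, and the complete bipartite graph $\{ac,ad,bc,bd\}$. The empty graph is impossible because $G$ is connected and the pairs are linearly ordered by $\prec$, so consecutive pairs must be adjacent — more carefully, if $G[P_1,P_2]$ had no edges then the split of $P_2$ could not be what forces the split of $P_1$; one should instead invoke that in a long-refinement graph each split is triggered by the previous one and an edgeless $G[P_1,P_2]$ would make $P_1$ and $P_2$ mutually invisible. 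In the complete bipartite case and in the ``double edge from one side'' cases, $a$ and $b$ have \emph{the same} number of neighbours in $P_2$ (namely $2,2$ or $1,1$ respectively, or $0,2$ — wait, in $\{ac,ad\}$ we get $|N(a)\cap P_2| = 2 \neq 0 = |N(b)\cap P_2|$), so these need to be ruled out by a separate argument rather than by the equality constraint alone; here I would use the degree bound $\deg(G) = \{2,3\}$ to bound how many edges a vertex can spend inside $\{P_1,P_2\}$, together with the fact that once $P_1$ is split the resulting partition must already be equitable except for the final cascade, which limits the multiplicities. The two perfect matchings survive all constraints, and that is exactly the claim.

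The main obstacle I anticipate is making the elimination of the non-matching candidates fully rigorous rather than hand-wavy: the equality constraint on the $P_2$-side and the inequality constraint on the $P_1$-side together kill the genuinely asymmetric configurations, but configurations like $\{ac, bd, ad\}$ (a matching plus one extra edge) are consistent with ``$c,d$ see $P_1$ equally''? No — there $|N(c)\cap P_1| = 1$ while $|N(d)\cap P_1| = 2$, so that is killed too; the truly delicate cases are precisely the ones symmetric in $\{a,b\}$, namely $K_{2,2}$ and the two ``cherries'' $\{ac,ad\}$, $\{bc,bd\}$. For those I would lean on Lemma~\ref{lem:not:biregular} in the contrapositive together with the observation that a cherry gives $\deg(a)$ or $\deg(b)$ at least one unit that could instead have gone elsewhere, combined with counting against the total budget imposed by $\deg(G)=\{2,3\}$ and the requirement that the partition just before the final cascade be as coarse as possible (only pairs). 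I would then phrase the whole thing as: among all bipartite graphs on $(P_1,P_2)$, exactly those in which $c,d$ have equal $P_1$-degree and $a,b$ have unequal $P_2$-degree are the two perfect matchings, and argue the remaining symmetric survivors away using Lemma~\ref{lem:not:biregular} and the degree constraint, concluding $G[P_1,P_2]$ is a matching.
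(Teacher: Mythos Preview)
Your first constraint is right and matches the paper: since $P_2$ is not split before $P_1$, the vertices $c,d\in P_2$ must have equal degree into $P_1$. But your second constraint is obtained by applying Lemma~\ref{lem:not:biregular} in the wrong direction. You write that the split of $P_1$ is ``triggered by the splitting of $P_2$ into singletons'' and take $C_1,C_2$ to be the singletons from $P_2$ and $C'=P_1$. That is backwards: $P_1$ is split \emph{before} $P_2$, so at the relevant iteration we have $\pi^{i}\setminus\pi^{i-1}=\{\{a\},\{b\}\}$ and $C'=P_2\in\pi^{i}\setminus\pi^{i+1}$. The lemma then yields $|N(c)\cap\{a\}|\neq|N(d)\cap\{a\}|$, i.e.\ exactly one of $c,d$ is adjacent to $a$. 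It does \emph{not} yield $|N(a)\cap P_2|\neq|N(b)\cap P_2|$.

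This matters because your final summary --- ``exactly those bipartite graphs in which $c,d$ have equal $P_1$-degree and $a,b$ have unequal $P_2$-degree are the two perfect matchings'' --- is simply false. Run the check: among the six configurations with $\deg_{P_1}(c)=\deg_{P_1}(d)$ (empty, the two matchings, the two cherries $\{ac,ad\}$ and $\{bc,bd\}$, and $K_{2,2}$), the ones where $a$ and $b$ have \emph{unequal} $P_2$-degree are precisely the two cherries, not the matchings; in a matching $\deg_{P_2}(a)=\deg_{P_2}(b)=1$. So your two constraints together single out the wrong graphs, and the subsequent attempt to ``argue the remaining symmetric survivors away'' using the $\{2,3\}$ degree bound is both aimed at the wrong targets and not actually carried out.

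With the correct second constraint the case analysis is a one-liner: if $c$ and $d$ have equal degree $\delta$ into $P_1$ and exactly one of them is adjacent to $a$, then $\delta=0$ and $\delta=2$ are impossible, so $\delta=1$ and the unique neighbour of $d$ in $P_1$ must be $b$; hence $G[P_1,P_2]$ is a perfect matching. No appeal to connectivity, to the global degree set, or to any ``final cascade'' equitability is needed.
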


Towards a compact representation of the graphs, we further examine the connections between pairs $P_1$ and $P_2$ with $S(P_1) \prec P_2$, where $S(P_1)$ is the successor of $P_1$ with respect to $\prec$. 

\begin{lemma}\label{lem:edges:nonsucc}
Let $P_1$ be a pair. Then exactly one of the following holds.

\begin{itemize}
  \item $P_1 \neq \min(\prec)$ and for every pair $P_2$ with $S(P_1) \prec P_2$, it holds that $E(G[P_1, P_2]) = \emptyset$.
  \item $P_1 = \min(\prec)$ and there are exactly two choices $P_2, P'_2$ for a pair $P'$ with $S(P_1) \prec P'$ such that $E(G[P_1, P']) \neq \emptyset$. Furthermore, there is a vertex $v_1 \in P_1$ such that $G[\{v_1\},P_2]$ and $G[P_1 \setminus \{v_1\},P'_2]$ are complete bipartite and $E(G[\{v_1\},P'_2]) = E(G[P_1 \setminus \{v_1\},P_2]) = \emptyset$.
\end{itemize}
\end{lemma}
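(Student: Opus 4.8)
The plan is to work throughout under Assumption~\ref{ass:max:iterations}. First I would fix the iteration $j$ with $\pi^j$ consisting only of pairs; since $G$ is a long-refinement graph, Proposition~\ref{prop:max:iterations} gives $|\pi^i| = i+1$ for $0 \le i \le |G|-1$, so $j = \tfrac{|G|}{2}-1$, the number $m := |G|/2$ of pairs equals $|\pi^j|$, and the partitions $\pi^{j+1},\dots,\pi^{|G|-1}$ are obtained by splitting the pairs one at a time into two singletons. Writing $P^{(1)} \prec P^{(2)} \prec \dots \prec P^{(m)}$ for the pairs in the order $\prec$ (so $P^{(1)} = \min(\prec)$), the pair $P^{(k)}$ is then a colour class of $\pi^i$ for all $j \le i \le j+k-1$ and is split in iteration $j+k$. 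The single observation that drives both cases: if a pair $P = \{p,q\}$ is a colour class of $\pi^{i+1}$ and $\{v\}$ is a (singleton) colour class of $\pi^i$, then $\chi^{i+1}(p) = \chi^{i+1}(q)$ forces $p$ and $q$ to have the same multiset of $\chi^i$-colours among their neighbours, and since $\chi^i(v)$ is attained by $v$ alone, $v \in N(p) \iff v \in N(q)$; combined with $\deg \le 3$ this rules out long-range edges. I then split into the cases $P_1 \neq \min(\prec)$ and $P_1 = \min(\prec)$.

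For $P_1 = P^{(k)}$ with $k \ge 2$ (the first bullet): if $k \in \{m-1,m\}$ there is no pair strictly $\prec$-above $S(P_1)$ and the claim is vacuous, so assume $2 \le k \le m-2$, fix a pair $P_2 = P^{(\ell)}$ with $\ell \ge k+2$, and write $P^{(k)} = \{a,b\}$. By Corollary~\ref{cor:succ:pairs:match}, $G[P^{(k-1)},P^{(k)}]$ and $G[P^{(k)},P^{(k+1)}]$ are matchings, so $a$ has a neighbour in each of the disjoint pairs $P^{(k-1)},P^{(k+1)}$. As $\ell \ge k+2$, the pair $P^{(\ell)}$ is still a colour class of $\pi^{j+k+1}$ while $\{a\}$ is already a colour class of $\pi^{j+k}$, so by the observation $a$ is adjacent to both or to neither vertex of $P^{(\ell)}$; the former would give $a$ at least $1+1+2 = 4$ distinct neighbours, contradicting $\deg(a) \le 3$, so $a$ — and likewise $b$ — is adjacent to neither, i.e.\ $E(G[P^{(k)},P^{(\ell)}]) = \emptyset$.

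For $P_1 = \min(\prec) = P^{(1)}$ (the second bullet): $\pi^{j-1}$ consists of $m-2$ pairs and one class $C$ of size $4$, which iteration $j$ splits into two pairs $C_1 = P^{(a)}$, $C_2 = P^{(b)}$, so Lemma~\ref{lem:not:biregular} applies with $i = j$, $C' = P^{(1)}$ and yields $x,y \in P^{(1)}$ with $|N(x)\cap C_1| \ne |N(y)\cap C_1|$; since $P^{(1)} \in \pi^j$ and $C \in \pi^{j-1}$, the value $d := |N(x)\cap C| = |N(y)\cap C|$ is common to $x,y$, whence also $|N(x)\cap C_2| \ne |N(y)\cap C_2|$. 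Hence $P^{(1)} \notin \{C_1,C_2\}$, and since $G[P^{(1)},P^{(2)}]$ is a matching, neither $C_i$ equals $S(P^{(1)}) = P^{(2)}$; thus $a,b \ge 3$ and $S(P^{(1)}) \prec C_1,C_2$. Running the refinement one step further, $C_1,C_2$ are still colour classes of $\pi^{j+2}$ while $\{x\},\{y\}$ are colour classes of $\pi^{j+1}$, so the observation gives $|N(x)\cap C_i|,|N(y)\cap C_i| \in \{0,2\}$; combining this with the two inequalities and $|N(x)\cap C| = |N(y)\cap C|$ forces $d = 2$ and, after possibly swapping $x$ and $y$, $C_1 \subseteq N(x)$, $N(x)\cap C_2 = \emptyset$, $N(y)\cap C_1 = \emptyset$, $C_2 \subseteq N(y)$ — exactly the asserted complete-bipartite pattern with $v_1 = x$, $P_2 = C_1$, $P_2' = C_2$. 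Finally, $x$ already has three neighbours (the two in $C_1$ and its matching-neighbour in $P^{(2)}$), and symmetrically for $y$, so $N(P^{(1)}) \subseteq C_1 \cup C_2 \cup P^{(2)}$; therefore $C_1$ and $C_2$ are the only two pairs strictly $\prec$-above $S(P^{(1)})$ joined to $P^{(1)}$ by an edge, and they are distinct.

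I expect the delicate part to be the second case: locating $C_1,C_2$ among the $P^{(k)}$, excluding $C_i \in \{P^{(1)},P^{(2)}\}$, and the small arithmetic that pins every incidence between $P^{(1)}$ and $C_i$ first to $\{0,2\}$ and then to the complete-bipartite shape — where one must be careful that $C$ is genuinely a colour class of $\pi^{j-1}$, so that $|N(\cdot)\cap C|$ is constant on $P^{(1)}$. The first case is short once the observation on shared singleton-neighbourhoods is available, and the only remaining fuss is the vacuous behaviour at the ends of $\prec$, handled by inspection.
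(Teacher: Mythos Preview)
Your proof is correct. The case $P_1 \neq \min(\prec)$ matches the paper's argument: both use that once $P_1$ has been split into singletons, any later pair $P'$ must see each singleton uniformly (your ``observation''; the paper phrases this as ``$G[\{v_1\},P']$ would not be biregular''), and the degree bound $3$ then forces the empty case.

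For $P_1 = \min(\prec)$ you take a genuinely different route. You identify $P_2,P_2'$ up front as the two halves $C_1,C_2$ of the unique size-$4$ class in $\pi^{j-1}$, invoke Lemma~\ref{lem:not:biregular} to obtain $|N(x)\cap C_1| \ne |N(y)\cap C_1|$, and then let the observation together with the constancy of $|N(\cdot)\cap C|$ on $P^{(1)}$ force the complete-bipartite pattern and the distinctness of $C_1,C_2$ in one stroke. The paper instead first argues abstractly that some $P_2$ with $S(P_1)\prec P_2$ carries an edge to $P_1$, shows $G[\{v_1\},P_2]$ must be complete bipartite, repeats for $v_1'$ to produce $P_2'$, and only afterwards excludes $P_2' = P_2$ by a separate contradiction (if they coincided, both $G[P_1,S(P_1)]$ and $G[P_1,P_2]$ would be biregular and no split of a size-$4$ class could trigger the split of $P_1$). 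Your organisation is more economical---the $P_2 \ne P_2'$ step is absorbed into the arithmetic---at the price of relying on the global count $|\pi^i| = i+1$ to know that $\pi^{j-1}$ has exactly one size-$4$ class; the paper's argument stays more local to $P_1$ and its neighbours.
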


\begin{proof}
Suppose $P_1 \neq \min(\prec)$. If $P_1 = \max(\prec)$, the statement trivially holds. Otherwise, by Corollary \ref{cor:succ:pairs:match}, every vertex $v_1 \in P_1$ has exactly one neighbour in $S(P_1)$ and exactly one neighbour in the predecessor of $P_1$, i.e.\ in the unique pair $A(P_1)$ such that $P_1 = S(A(P_1))$. Thus, due to the degree restrictions, $v_1$ can have at most one additional neighbour in a pair $P'$ with $P_1 \prec P'$ and $P' \neq S(P_1)$. However, if $v_1$ had a neighbour in such a $P'$, the graph $G[\{v_1\},P']$ would not be biregular, implying that $P' = S(P_1)$, a contradiction. Therefore, $N(v_1) \subseteq A(P_1) \cup P_1 \cup S(P_1)$ and thus, $N(P_1) \subseteq A(P_1) \cup S(P_1)$. In particular, for every pair $P_2$ with $P_1 \prec P_2$ and $P_2 \neq S(P_1)$, it holds that $E(G[P_1, P_2]) = \emptyset$.

Now suppose that $P_1 = \min (\prec)$. Since the splitting of $P_1$ must be induced by a splitting of a union of two pairs and $G[P_1,S(P_1)]$ is biregular and $G[P_1]$ is regular, we cannot have $N(P_1) \subseteq S(P_1)$. Thus, there is a pair $P_2$ with $S(P_1) \prec P_2$ and such that $E(G[P_1,P_2]) \neq \emptyset$. Let $v_1 \in P_1$ be a vertex with $N(v_1) \cap P_2 \neq \emptyset$. Then $P_2 \subseteq N(v_1)$, otherwise $P_2 = S(P_1)$. Thus, $G[\{v_1\},P_2]$ is complete bipartite. Therefore and due to the degree restrictions, $v_1$ has exactly three neighbours: one in $S(P_1)$ and two in $P_2$. In particular, for every pair $P'_2$ with $P_2 \neq P'_2 \neq S(P_1)$, it holds that $E(G[\{v_1\},P'_2]) = \emptyset$.

Let $v'_1 \neq v_1$ be the second vertex in $P_1$. Since the splitting of $P_1$ induces the splitting of $S(P_1)$, by Proposition \ref{prop:max:iterations}, for every pair $P'$ with $P_1 \neq P' \neq S(P_1)$, the graph $G[\{v'_1\},P']$ must be biregular, i.e.\ either empty or complete bipartite.

Moreover, since $\deg(v_1) = 3$, also $\deg(v'_1) = 3$. By Corollary \ref{cor:succ:pairs:match}, it holds that $|N(v'_1) \cap S(P_1)| = 1$. Therefore, there is exactly one pair $P'_2$ such that $G[\{v'_1\},P'_2]$ is complete bipartite and for all other pairs $P'$ with $P_1 \neq P' \neq S(P_1)$, the graph $G[\{v'_1\},P']$ is empty. 

Suppose $P'_2 = P_2$. Choose $i$ such that $\pi^{i} \setminus \pi^{i+1} = \{P_1\}$. Then the unique element in $\pi^{i-1} \setminus \pi^{i}$ is a union of two pairs, whose splitting induces the splitting of $P_1$. However, $N(P_1) = S(P_1) \cup P_2$ and both graphs $G[P_1, S(P_1)]$ and $G[P_1,P_2]$ are biregular. 

Thus, $P'_2 \neq P_2$, which concludes the proof.
\end{proof}

Corollary \ref{cor:succ:pairs:match} and Lemma \ref{lem:edges:nonsucc} characterise $G[P_1,P_2]$ for all pairs $P_1 \neq P_2$. Thus, all additional edges must be between vertices from the same pair. Hence, we can use the following compact graphical representation to fully describe the graphs of order at least 12 that we found. As the set of nodes, we take the pairs. We order them according to $\prec$ and connect successive pairs with an edge representing the matching. If the two vertices of a pair are adjacent, we indicate this with a loop at the corresponding node. The only other type of connection between pairs is constituted by the edges from $\min(\prec)$ to two other pairs which form the last colour class of size 4, i.e.\ a colour class of size 4 in the partition $\pi^i$ for which $\pi^{i+1} \setminus \pi^{i+2} = \{\min(\prec)\}$. We indicate this type of edge with a dotted curve.

An example graph as well as the evolution of the colour classes computed by Colour Refinement on the graph is depicted in Figure \ref{fig:col:ref:max:iterations}. 

\begin{figure}[h]
\newcommand\vDist{.40cm}
\newcommand{\graphwidth}{0.22\textwidth}
\begin{tikzpicture}
\node[anchor=north west,text width=\textwidth-1pt,inner sep=0] (line1)
    {\includegraphics[page=1,width=\graphwidth]{./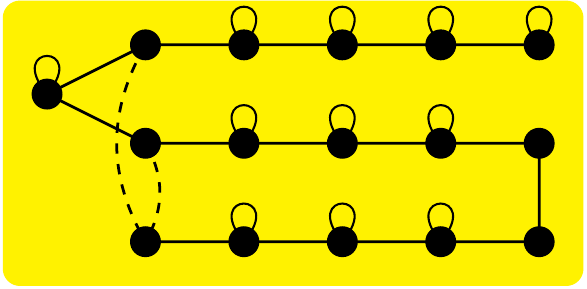} \hfill
     \includegraphics[page=2,width=\graphwidth]{./graph2-game.pdf} \hfill
     \includegraphics[page=3,width=\graphwidth]{./graph2-game.pdf} \hfill
     \includegraphics[page=4,width=\graphwidth]{./graph2-game.pdf}};
\draw ($(line1.south west)+(0,-\vDist/2)$) -- ($(line1.south east)+(0,-\vDist/2)$);
\node[anchor=north west,text width=\textwidth-1pt,inner sep=0] (line2) at ($(line1.south west)-(0,\vDist)$)
    {\includegraphics[page=5,width=\graphwidth]{./graph2-game.pdf} \hfill
     \includegraphics[page=6,width=\graphwidth]{./graph2-game.pdf} \hfill
     \includegraphics[page=7,width=\graphwidth]{./graph2-game.pdf} \hfill
     \includegraphics[page=8,width=\graphwidth]{./graph2-game.pdf}};
\draw ($(line2.south west)+(0,-\vDist/2)$) -- ($(line2.south east)+(0,-\vDist/2)$);
\node[anchor=north west,text width=\textwidth-1pt,inner sep=0] (line3) at ($(line2.south west)-(0,\vDist)$)
    {\includegraphics[page=9,width=\graphwidth]{./graph2-game.pdf} \hfill
     \includegraphics[page=10,width=\graphwidth]{./graph2-game.pdf} \hfill
     \includegraphics[page=11,width=\graphwidth]{./graph2-game.pdf} \hfill
     \includegraphics[page=12,width=\graphwidth]{./graph2-game.pdf}};
\draw ($(line3.south west)+(0,-\vDist/2)$) -- ($(line3.south east)+(0,-\vDist/2)$);
\node[anchor=north west,text width=\textwidth-1pt,inner sep=0] (line4) at ($(line3.south west)-(0,\vDist)$)
    {\includegraphics[page=13,width=\graphwidth]{./graph2-game.pdf} \hfill
     \includegraphics[page=14,width=\graphwidth]{./graph2-game.pdf} \hfill
     \includegraphics[page=15,width=\graphwidth]{./graph2-game.pdf} \hfill
     \includegraphics[page=16,width=\graphwidth]{./graph2-game.pdf}};
\end{tikzpicture}
\caption{Top left: A long-refinement graph $G$ on 32 vertices. The subsequent pictures show the partitions of $V(G)$ after the first 15 Colour Refinement iterations. There are 16 further iterations not depicted here, which consist in the splitting of the pairs into singletons.}\label{fig:col:ref:max:iterations}
\end{figure}

\begin{notation}\label{not:string}
Since $\prec$ is a linear order, we can also use a string representation to fully describe the graphs. For this, we introduce the following notation, letting $A(P)$ and $S(P)$ be the predecessor and successor of $P$, respectively, with respect to $\prec$. 
\begin{itemize}
  \item 0 represents a pair of vertices of degree 2.
  \item 1 represents a pair $P$ of vertices of degree 3 that is not the minimum of $\prec$ and for which $N(P) \subseteq A(P) \cup S(P)$. (This implies that $P \in E(G)$.)
  \item X represents a pair $P$ of vertices of degree 3 that is not the minimum of $\prec$ and for which $N(P) \not\subseteq A(P) \cup S(P)$. 
  \item S represents the minimum of $\prec$.
\end{itemize}
\end{notation}

Thus, by Lemma \ref{lem:edges:nonsucc}, there are exactly two pairs of type X, namely $P_2$ and $P'_2$ from the lemma. Now we can use the alphabet $\Sigma = \{0,1,\mathrm{S},\mathrm{X}\}$ and the order $\prec$ to encode the graphs as strings. The $i$-th letter of a string is the $i$-th element of $\prec$. Note that S is always a pair of non-adjacent vertices of degree 3 due to the degree restrictions. For example, the string representation for the graph in Figure \ref{fig:col:ref:max:iterations} is S11100111X1X1110.

Formally, for every $\ell \geq 3$ and every string $\Xi \colon [\ell] \rightarrow \{0,1,\mathrm{S},\mathrm{X}\}$ with $\Xi(1) = \mathrm{S}$ and $\Xi^{-1}(\mathrm{X}) = \{r,r'\}$ for some $r, r' \in [\ell]$ with $r < r'$, we define the corresponding graph $G \coloneqq G(\Xi)$ with $V(G) = \big\{v_{i,j} \, \big\vert \, i \in [\ell], j \in [2]\big\}$ and
\begin{align*}
E(G) ={} &\big\{\{v_{i,1},v_{i,2}\} \, \big\vert \, i \in [\ell], \Xi(i) = 1\big\} 
\cup{}\\&
\big\{\{v_{i,j},v_{i+1,j}\} \, \big\vert \, i \in [\ell-1], j \in [2]\big\} 
\cup{}\\& 
\big\{\{v_{r\mspace{0.9mu},j},v_{1,1}\} \, \big\vert \, j \in [2]\big\} 
\cup{}\\&
\big\{\{v_{r'\mspace{-5mu},j},v_{1,2}\} \, \big\vert \, j \in [2]\big\}.
\end{align*}

We use this encoding in the next section, which contains our main results.

\section{Infinite Families of Long-Refinement Graphs}\label{sec:long:ref:graphs}

In this section, we present infinite families of long-refinement graphs. We adapt them further to deduce that $\WL_1(n) \geq n-2$ holds for all $n \in \mathbb{N}_{\geq 10}$.

For $w \in \{0,1\}^*$, the notation $(w)^k$ abbreviates the $k$-fold concatenation of $w$. We let $1^k \coloneqq (1)^k$.

\begin{figure}[ht]
\newcommand\vDist{.40cm}
\newcommand\graphwidth{.3\textwidth}
\begin{tikzpicture}
\node[anchor=north west,text width=\textwidth-1pt,inner sep=0] (line1)
  {\includegraphics[page=1,width=\graphwidth]{./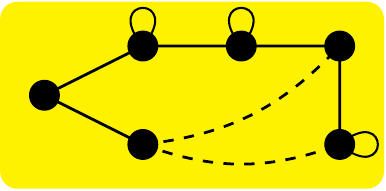} \hfill
   \includegraphics[page=2,width=\graphwidth]{./graph7-s011xx.pdf} \hfill
   \includegraphics[page=3,width=\graphwidth]{./graph7-s011xx.pdf}};
\draw ($(line1.south west)+(0,-\vDist/2)$) -- ($(line1.south east)+(0,-\vDist/2)$);
\node[anchor=north west,text width=\textwidth-1pt,inner sep=0] (line2)  at ($(line1.south west)-(0,\vDist)$)
  {\includegraphics[page=4,width=\graphwidth]{./graph7-s011xx.pdf} \hfill
   \includegraphics[page=5,width=\graphwidth]{./graph7-s011xx.pdf} \hfill
   \includegraphics[page=6,width=\graphwidth]{./graph7-s011xx.pdf}};
\end{tikzpicture}
\caption{A visualisation of the graph with string representation S011XX and the evolution of the colour classes in the first 5 Colour Refinement iterations on the graph.}\label{fig:s011xx}
\end{figure}

\begin{theorem}\label{thm:infinite:families}
 For every string $\Xi$ contained in the following sets, the graph $G(\Xi)$ is a long-refinement graph.
 \begin{itemize}
  \item $\{\mathrm{S011XX}\}$
  \item $\{\mathrm{S1^{\mathit{k}}001^{\mathit{k}}X1X1^{\mathit{k}}0} \mid k \in \mathbb{N}_0\}$
  \item $\{\mathrm{S1^{\mathit{k}}11001^{\mathit{k}}XX1^{\mathit{k}}0} \mid k \in \mathbb{N}_0\}$
  \item $\{\mathrm{S1^{\mathit{k}}0011^{\mathit{k}}XX1^{\mathit{k}}10} \mid k \in \mathbb{N}_0\}$
  \item $\{\mathrm{S011(011)^{\mathit{k}}00(110)^{\mathit{k}}XX(011)^{\mathit{k}}0} \mid k \in \mathbb{N}_0\}$
  \item $\{\mathrm{S(011)^{\mathit{k}}00(110)^{\mathit{k}}1X0X1(011)^{\mathit{k}}0} \mid k \in \mathbb{N}_0\}$
\end{itemize}
\end{theorem}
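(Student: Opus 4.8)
The plan is to verify, for each of the six families, that running Colour Refinement on $G(\Xi)$ produces exactly one new colour class per iteration until the partition is discrete, which by Proposition~\ref{prop:max:iterations} and the trivial upper bound suffices to conclude $\WL_1(G(\Xi)) = |G(\Xi)| - 1$. Since $G(\Xi)$ has $2\ell$ vertices when $\Xi$ has length $\ell$, and the string families have lengths $6$, $3k+6$, $3k+7$, $3k+7$, $9k+9$, $9k+9$ respectively, it is enough in each case to track the evolution $\pi^0 \succneqq \pi^1 \succneqq \cdots$ and check that $|\pi^i| = i+1$ for all $i$ up to stabilisation. I would first dispatch the finite case $\mathrm{S011XX}$ by direct inspection (Figure~\ref{fig:s011xx} already exhibits the refinement), and then treat the five infinite families by induction on the parameter~$k$, or — more cleanly — by an explicit description of $\pi^i$ for every~$i$ as a function of~$k$.

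The key structural observation to exploit is the ``anatomy'' of $G(\Xi)$ established in Section~\ref{sec:encodings}: the pairs $\{v_{i,1},v_{i,2}\}$ form a path (via the matchings between successive pairs), with a loop exactly at the $1$-positions, and two extra matchings joining $v_{1,1}$ to the first X-pair and $v_{1,2}$ to the second X-pair. So the refinement behaves like a more elaborate version of Fact~\ref{fact:paths}: the initial split is $\{V_2, V_3\}$ by degree (the only degree-2 vertices are the two vertices of the unique $0$-pair, and at the ``branch'' vertex $v_{1,1}$ nothing is degree~2 — one must double-check each family has exactly one $0$-letter so that $|V_2|=2$). Thereafter, ``distinguishedness'' propagates along the path one pair at a time, and the two X-pairs together with the S-pair form the terminal size-$4$ colour class that splits last; the loops at the $1$-pairs are invariant under the local structure and do not cause any premature split. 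The heart of the argument is to show that at each step the frontier advances by exactly one pair and, crucially, that the two ``arms'' of the path on either side of the $0$-pair never get distinguished from each other prematurely — this is engineered by the mirror symmetry of the strings (e.g.\ $1^k 0\,0\,1^k$, $(011)^k 00 (110)^k$), which gives an automorphism of $G(\Xi)$ swapping the two arms, forcing the colour classes to remain symmetric until the very end.

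Concretely, for a representative family such as $\mathrm{S1^k001^kX1X1^k0}$ I would: (i) identify the automorphism group, in particular the involution exchanging the two length-$k$ runs of $1$'s, which guarantees the two arms stay merged; (ii) describe $\pi^1, \pi^2, \dots$ explicitly — after the degree split, the neighbours of the $0$-pair split off, then their neighbours, and so on, so that after $t$ iterations the $t$ pairs closest (in path distance) to the $0$-pair on each arm have been resolved into colour classes, each step adding exactly one class because the symmetric pair of positions is resolved simultaneously as a single class of size~$2$; (iii) check that when the propagation reaches the S/X region, the S-pair and the two X-pairs behave as forced by Lemma~\ref{lem:edges:nonsucc}, i.e.\ they constitute the final size-$4$ class that collapses to four singletons in the last two steps, and the matchings along the path then collapse all remaining pairs one by one in the splitting order~$\prec$ described by the string; (iv) count: the number of iterations equals the number of colour-class creations, which is $2\ell - 1 = |G(\Xi)| - 1$.

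The main obstacle I expect is bookkeeping rather than conceptual difficulty: one must verify for each of the six families that (a) there is indeed exactly one $0$-letter (so the first split yields a size-$2$ degree class and does not over-split), (b) the symmetry really is an automorphism of $G(\Xi)$ with the stated orbit structure, and (c) no iteration accidentally splits two classes at once — for instance, when the propagating ``distinguished distance'' signal from the two arms meets, or when it interacts with a loop, or when it reaches the S-pair from two sides. Establishing (c) requires a careful local analysis at the junction vertices $v_{1,1}, v_{1,2}$ and at the X-pairs, using Lemma~\ref{lem:not:biregular} to confirm that the class split is forced and unique at each step. I would organise the write-up by proving a single lemma describing $\pi^i$ for a general member of each family (parametrised by $k$ and $i$), from which the theorem is immediate; the $\mathrm{S011XX}$ case then drops out as a degenerate instance. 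Because the families are close cousins (same skeleton, differing only in the placement of the X's and a few $1$'s near the ends), much of the analysis can be shared, with only the endgame near S, X, X treated separately for each.
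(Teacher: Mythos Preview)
Your plan has a genuine structural misconception that would cause the argument to fail. You assume each infinite family ``has exactly one $0$-letter so that $|V_2|=2$'', and you build the whole propagation picture on a single $0$-pair with two symmetric arms. But every infinite family in the list has \emph{three} (or, for the last two families, many more) $0$-letters; e.g.\ in $\mathrm{S}1^k001^k\mathrm{X}1\mathrm{X}1^k0$ the degree-$2$ class has six vertices, and there are \emph{three} blocks $1^k$, not two. Consequently there is no ``involution exchanging the two length-$k$ runs of $1$'s'', and your one-pass propagation count (roughly $k$ steps to the ends, then $\ell$ steps to discretise the pairs) falls short of the required $2\ell-1 = 6k+13$ iterations.

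What actually happens, and what the paper's proof exploits, is that the three $1^k$ blocks should be viewed as six parallel \emph{rows} organised into $k$ \emph{columns} of size~$6$. The refinement signal does not make a single pass through this grid; it sweeps back and forth \emph{six times} (once per row, in effect), each sweep costing $k$ iterations of ``path propagation'' through the columns plus $O(1)$ iterations at the boundary pairs, for a total of $6k+13$. The paper formalises this via two claims (left-to-right and right-to-left propagation) showing that if the boundary column $V_1$ or $V_{k+2}$ is partitioned according to some row-grouping $I_1,\dots,I_\ell$, then after $k$ iterations the same row-grouping has been imposed on every interior column, with nothing else changing. The base case $k=0$ is checked by hand and then interleaved with six applications of these propagation claims. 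Your symmetry/one-pass picture would have to be replaced by this multi-pass column mechanism; without it, step~(c) of your plan (``no iteration accidentally splits two classes at once'') cannot be carried out, and step~(iv) gives the wrong count.
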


\begin{proof}
 Let $G \coloneqq G(\mathrm{S011XX})$ (cf.\ Figure \ref{fig:s011xx}). The vertices $v_{2,1}$ and $v_{2,2}$ are the only ones of degree 2. Thus,
 \begin{align*}
 \pi^1 &= \big\{\{v_{2,1}, v_{2,2}\}, V(G) \setminus \{v_{2,1}, v_{2,2}\}\big\},
 \\\pi^2 &= \big\{\{v_{2,1}, v_{2,2}\}, \{v_{i,j} \mid i \in \{1,3\}, j \in [2]\}, \{v_{i,j} \mid i \in [4,6], j \in [2]\}\big\}.
 \intertext{Then} 
 \pi^3 &= \big\{\{v_{1,1}, v_{1,2}\}, \{v_{2,1}, v_{2,2}\}, \{v_{3,1}, v_{3,2}\}, \{v_{i,j} \mid i \in [4,6], j \in [2]\}\big\},
 \intertext{since the vertices in the S-pair have no neighbours in $\{v_{i,j} \mid i \in \{1,3\}, j \in [2]\}$. Similarly,} 
 \pi^4 &= \big\{\{v_{1,1}, v_{1,2}\}, \{v_{2,1}, v_{2,2}\}, \{v_{3,1}, v_{3,2}\}, \{v_{4,1}, v_{4,2}\}, \{v_{i,j} \mid i \in [5,6], j \in [2]\}\big\},\\
  \pi^5 &= \big\{\{v_{i,j} \mid j \in [2]\} \mid i \in [6]\big\}.
  \end{align*}
 Now the splitting of the last colour class of size 4 into two X-pairs induces the splitting of the S-pair into singletons, which is propagated linearly according to $\prec$, adding 6 further iterations, thus summing up to 11 iterations.

 We now consider the various infinite families of graphs. The proofs for them work similarly by induction over $k$. Therefore, we only present the full detailed proof for the family $\{\mathrm{S1^{\mathit{k}}001^{\mathit{k}}X1X1^{\mathit{k}}0} \mid k \in \mathbb{N}_0\}$, which includes the graph from Figure \ref{fig:col:ref:max:iterations}. 

 For $k=0$, the graph $G_0 \coloneqq G(\mathrm{S00X1X0})$ has 14 vertices. It is easy to verify that it indeed takes 13 Colour Refinement iterations to stabilise. We sketch how Colour Refinement processes the graph: for this, for $i \in \mathbb{N}_0$, we let $\pi_0^i$ denote the partition of $V(G_0)$ induced by $\chi^i_{G_0}$, i.e.\ after $i$ iterations of Colour Refinement on $G_0$. First, vertices are assigned colours indicating their degrees. That is, 
 \begin{align*}
 \pi_0^1 = \big\{&\{v_{i,j} \mid i \in \{2,3,7\}, j \in [2]\}, \{v_{i,j} \mid i \in \{1,4,5,6\}, j \in [2]\}\big\}.
 \intertext{Now} 
 \pi_0^2 = \big\{&\{v_{i,j} \mid i \in \{2,3,7\}, j \in [2]\}, \{v_{i,j} \mid i \in \{1,4,6\}, j \in [2]\}, \{v_{5,i} \mid i \in [2]\}\big\},
 \intertext{since the vertices contained in the $1$-pair are not adjacent to vertices from $0$-pairs. Since no vertex contained in the S-pair is adjacent to any vertex from the $1$-pair, we obtain}
  \pi_0^3 = \big\{&\{v_{i,j} \mid i \in \{2,3,7\}, j \in [2]\}, \{v_{i,j} \mid i \in \{4,6\}, j \in [2]\}, \{v_{5,i} \mid j \in [2]\}, 
  \\&\{v_{1,j} \mid j \in [2]\}\big\}.
 \intertext{Furthermore,}
 \pi_0^4 = \big\{&\{v_{i,j} \mid i \in \{3,7\}, j \in [2]\}, \{v_{i,j} \mid i \in \{4,6\}, j \in [2]\}, \{v_{5,i} \mid j \in [2]\}, 
 \\&\{v_{1,j} \mid j \in [2]\}, \{v_{2,j} \mid j \in [2]\}\big\},
 \\
 \pi_0^5 = \big\{&\{v_{7,j} \mid j \in [2]\}, \{v_{i,j} \mid i \in \{4,6\}, j \in [2]\}, \{v_{5,i} \mid j \in [2]\}, 
 \\&\{v_{1,j} \mid j \in [2]\}, \{v_{2,j} \mid j \in [2]\}, \{v_{3,j} \mid j \in [2]\}\big\},
 \\
 \pi_0^6 = \big\{&\{v_{i,j} \mid j \in [2]\} \mid i \in [7]\big\},
 \end{align*}
 i.e.\ with respect to the order $\prec$ induced by the string representation, the first $0$-pair, the second $0$-pair and the first X-pair are separated from the others. 
 Once the two X-pairs form separate colour classes, this induces the splitting of S into two singletons, which is propagated linearly through the entire string, adding 7 further iterations, thus summing up to 13 iterations.

 For general $k \geq 1$, let $G_k \coloneqq G(\mathrm{S1^{\mathit{k}}001^{\mathit{k}}X1X1^{\mathit{k}}0})$. To count the iterations of Colour Refinement, we introduce some vocabulary for the pairs in $G_k$ (see also Figure \ref{fig:col:ref:max:iterations}). We let $V \coloneqq \{v_{i,j} \mid i \in [2,k+1] \cup [k+4,2k+3] \cup [2k+7,3k+6], j \in [2]\}$. 
 Note that $V$ is the set of vertices contained in the subgraphs corresponding to the substrings $1^k$ in the string representation. Furthermore, for all $i \in [k+2]$, we call the set $\{v_{i',j} \mid i' \in \{i,2k+5-i,2k+5+i\}, j \in [2]\}$ the \emph{$i$-th column} and denote it by $V_i$. The \emph{$0$-th column} is the set $\{v_{2k+5,j} \mid j \in [2]\}$. Thus, 
 \[V = \bigcup_{2 \leq i \leq k+1} V_i.\] 
 For every $j \in [2]$, the sets $\{v_{i,j} \mid i \in [1,k+2]\}$, $\{v_{i,j} \mid i \in [k+3,2k+4]\}$, and $\{v_{i,j} \mid i \in [2k+6,3k+7]\}$ are called \emph{rows}. In accordance with Figure \ref{fig:col:ref:max:iterations}, we fix an ordering on the rows: the \emph{first row} is $\{v_{i,1} \mid i \in [2k+6,3k+7]\}$, the \emph{second row} is $\{v_{i,2} \mid i \in [2k+6,3k+7]\}$, \dots, the \emph{sixth row} is $\{v_{i,2} \mid i \in [1,k+2]\}$. To be able to refer to the vertices in $V$ and the adjacent columns more easily, we relabel them: for $i \in [k+2], j \in [6]$, the vertex $w_{i,j}$ is defined to be the unique vertex in the $i$-th column and the $j$-th row.

 The following observation is the crucial insight for counting the iterations of Colour Refinement on $G_k$. We will use it to show that, informally stated, the subgraph $G_k[V]$ delays the propagation of the splitting of the colour classes in the remainder of the graph by $k$ iterations whenever the splitting of a colour class contained in $V_1$ or $V_{k+2}$ initiates a splitting of a colour class contained in $V$.

\begin{claim}
 Consider a colouring $\lambda$ of $G_k$ and its induced partition $\pi_k$ of $V(G_k)$. For $t \in \mathbb{N}_0$, let $\pi_k^t$ be the partition induced by $\chi^t_{G_k}$ on input $(G_k, \lambda)$. Suppose $G_k, \lambda, \pi_k$ satisfy the following conditions.
 \begin{enumerate}
  \item There exist $\ell \in [6]$ and $I_1, \dots, I_\ell \subseteq [6]$ such that $\bigcup_{i \in [\ell]} I_i = [6]$ and $I_i \cap I_j = \emptyset$ for $1 \leq i < j \leq \ell$ and for every $i \in [\ell]$, it holds that $\{w_{k+2,j'} \mid j' \in I_i\} \in \pi_k^0$. That is, $V_{k+2}$ is a union of colour classes with respect to $\lambda$.\label{prereq:1}
  \item $\pi_k^1 = \big\{\{w_{k+1,j'} \mid j' \in I_i\} \, \big\vert \, i \in [\ell]\big\}  \cup \big\{C \setminus V_{k+1} \, \big\vert \, C \in \pi_k, C \setminus V_{k+1} \neq \emptyset\big\}$.\label{prereq:2}
  \item For all $C,C' \subseteq V_{k+1}$ with $C,C' \in \pi_k^1$, the graph $G_k[C]$ is regular and $G_k[C,C']$ is biregular.\label{prereq:3}
\end{enumerate}
Then for every $t \in [k]$, it holds that
\begin{align*}
\pi_k^{t} = &\bigcup_{i' \in [t]} \big\{\{w_{k+2-i',j'} \mid j' \in I_i\} \, \big\vert \, i \in [\ell]\big\} \cup {}
\\&\bigg\{C \setminus \Big(\bigcup_{i' \in [t]} V_{k+2-i'}\big) \, \bigg\vert \, C \in \pi_k^0, C \setminus \Big(\bigcup_{i' \in [t]} V_{k+2-i'}\Big) \neq \emptyset\bigg\}.
\end{align*}
\end{claim}

\begin{claimproof}
We show the claim via induction. For $t = 1$, the statement is exactly the second item from the assumptions. For the inductive step, suppose the statement holds for all $t' \leq t$ for some $t \in [k-1]$. We show that it also holds for $t+1$. 

Note that the right-hand side of the equation is a partition of $V(G_k)$. Thus, it suffices to show ``$\supseteq$'', i.e.\ that the right-hand side is contained in the left-hand side of the equation.

Since $V_{k+2}$ is a union of elements of $\pi_k^0$, it is also a union of elements of $\pi_k^t$. Thus, by the induction hypothesis, 
\begin{align}\label{eq:first:column}
\{C \mid C \in \pi_k^0, C \cap V_{k+2} \neq \emptyset\} &= \{C \mid C \in \pi_k^0, C \subseteq V_{k+2}\}\nonumber 
\\&\subseteq \bigg\{C \mspace{-2mu}\setminus\mspace{-3mu} \Big(\mspace{-4mu}\bigcup_{i' \in [t]} \mspace{-3mu} V_{k+2-i'}\Big) \, \bigg\vert \, C \in \pi_k^0, C \mspace{-2mu}\setminus\mspace{-3mu} \Big(\mspace{-4mu}\bigcup_{i' \in [t]} \mspace{-3mu} V_{k+2-i'}\mspace{-2mu}\Big) \mspace{-2mu}\neq \mspace{-2mu}\emptyset\bigg\}\nonumber 
\\&\subseteq \pi_k^t.
\end{align}
Similarly, using the induction hypothesis for $t-1$, all other $V_i$ with $i \geq k+2-(t-1)$ are unions of elements of $\pi_k^{t-1}$ (if $t=1$, this holds trivially). Thus, 
\begin{align*}
\bigg\{C \, \bigg\vert \, C \in \pi_k^{t-1}, C \cap \bigcup_{i' \in [t-1]} V_{k+2-i'} \neq \emptyset\bigg\} &= \bigg\{C \, \bigg\vert \, C \in \pi_k^{t-1}, C \subseteq \bigcup_{i' \in [t-1]} V_{k+2-i'}\bigg\} 
\\&= \bigcup_{i' \in [t-1]} \big\{\{w_{k+2-i',j'} \mid j' \in I_i\} \, \big\vert \, i \in [\ell]\big\}
\\&\subseteq \pi_k^t.
\end{align*}

This means that all elements from $\pi_k^{t-1}$ that have a non-empty intersection with the set $\bigcup_{i' \in [0,t-1]} V_{k+2-i'}$ are also present in $\pi_k^t$. Therefore, for all $C, C' \in \pi_k^t$ with $C \cap \bigcup_{i' \in [0,t-1]} V_{k+2-i'} \neq \emptyset \neq C' \cap \bigcup_{i' \in [0,t-1]} V_{k+2-i'}$, the graph $G_k[C]$ must be regular and $G_k[C,C']$ must be biregular. (Otherwise, at least one of these classes would have been split in the $t$-th iteration.)

Actually, this also holds when relaxing the restriction for $C'$ to have a non-empty intersection with $\bigcup_{i' \in [0,t]} V_{k+2-i'}$. Indeed, $G_k[V_{k+2-t},V_{k+2-(t-1)}]$ is a matching between vertices contained in equal rows and, by the induction hypothesis for $t$, it holds that $\{C \mid C \in \pi_k^t, C \subseteq V_{k+2-(t-i')}\} = \big\{\{w_{k+2-(t-i'),j'} \mid j' \in I_i\} \mid i \in [\ell]\big\}$ for $i' \in \{0,1\}$. Thus, for all $C,C' \in \pi_k^t$ with $C \subseteq V_{k+2-(t-1)}$ and $C' \subseteq  V_{k+2-t}$, the graph $G_k[C,C']$ is either a perfect matching or empty. In particular, it is biregular. 

Note that there are no edges between vertices from columns $V_i$, $V_{i'}$ with $|i-i'| \geq 2$. Hence, in fact, for every $C \in \pi_k^t$ with $C\cap \bigcup_{i' \in [0,t-1]} V_{k+2-i'} \neq \emptyset$ and for all $C' \in \pi_k^t$, the graph $G_k[C]$ is regular and $G_k[C,C']$ is biregular. Thus, every $C \in \pi_k^t$ with $C\cap \bigcup_{i' \in [0,t-1]} V_{k+2-i'} \neq \emptyset$ is present in $\pi_k^{t+1}$. This shows that
\begin{equation}\label{eq:hypothesis:columns}
\bigcup_{i' \in [t-1]}\big\{\{w_{k+2-i',j'} \mid j' \in I_i\} \, \big\vert \, i \in [\ell]\big\} \subseteq \pi_k^{t+1}.
\end{equation} 
and, using \eqref{eq:first:column}, that
\begin{equation}\label{eq:first:column:outside}
\big\{C \, \big\vert \, C \in \pi_k^0, C \cap V_{k+2} \neq \emptyset \big\} \subseteq \pi_k^{t+1}.
\end{equation} 

We have seen that for $C,C' \in \pi_k^t$ with $C \subseteq \bigcup_{i' \in [0,t-1]} V_{k+2-i'}$ and $C' \subseteq V_{k+2-t}$, the graph $G[C]$ is regular and $G[C,C']$ is biregular. We now show that we can actually relax the location restriction for $C$ to $C \subseteq \bigcup_{i' \in [0,t]} V_{k+2-i'}$. For this, note that all subgraphs $G_k[V_i]$ with $i \in [2,k+1]$ have the same structure. That is, for all $r,s \in [6]$ and all $i,j \in [2,k+1]$, the vertices in $V_i$ in rows $r$ and $s$ are adjacent if and only if the corresponding ones are adjacent in $V_j$. Furthermore, by the induction assumption for $t$, it holds that $\{C \mid C \in \pi_k^t, C \subseteq V_{i'}\} = \big\{\{w_{i',j'} \mid j' \in I_i\} \mid i \in [\ell]\big\}$ for $i' \in \{k+2-t,k+1\}$. Thus, by Conditions \eqref{prereq:2} and \eqref{prereq:3} from the prerequisites of the claim, for all $C,C' \subseteq V_{k+2-t}$ with $C,C' \in \pi_k^t$, the graph $G_k[C]$ is regular and $G_k[C,C']$ is biregular.

In order to determine the colour classes of $\chi^{t+1}_{G_k}$ contained in $V_{k+2-t}$, we still need to analyse the structure of the graph $G_k[V_{k+1-t},V_{k+2-t}]$ with respect to $\pi_k^t$. To this end, for $i \in [t+1]$ and $j \in [0,k+2]$, set $M^i_j \coloneqq \{C \cap V_{j} \mid C \in \pi_k^{i}, C \cap V_{j} \neq \emptyset\}$. That is, $M^i_j$ is the partition of $V_{j}$ induced by $\pi_k^i$. 

The graph $G_k[V_{k+2-(t+1)},V_{k+2-t}]$ is a matching between vertices contained in equal rows and furthermore, by the induction assumption, we know $\{C \mid C \in \pi_k^t, C \subseteq V_{k+2-t}\} = \big\{\{w_{k+2-t,j'} \mid j' \in I_i\} \mid i \in [\ell]\big\}$. Therefore, 
\[M^{t-1}_{k+1-t} \succeq \big\{\{w_{k+1-t,j'} \mid j' \in I_i\} \mid i \in [\ell]\big\}.\]
However, the induction assumption yields $M^{i-1}_j =M^i_j$ for all $i \in [t], j \in [0,k+1-t]$. In particular, the partition of $V_{k+1-t}$ induced by $\pi_k^{t}$ is not strictly finer than the one induced by $\pi_k^{t-1}$. Thus, 
\begin{equation}\label{eq:coarse:columns}
M^{t}_{k+1-t} \succeq \big\{\{w_{k+1-t,j'} \mid j' \in I_i\} \mid i \in [\ell]\big\}.
\end{equation}
Therefore, again using the induction hypothesis for $t$, we obtain
\begin{equation}\label{eq:old:column}
\big\{\{w_{k+2-t,j'} \mid j' \in I_i\} \, \big\vert \, i \in [\ell]\big\} \subseteq \pi_k^{t+1}.
\end{equation}
Since $M^{t-1}_{k+1-t} = M^t_{k+1-t}$ and $M^{t-1}_{k-t} = M^t_{k-t}$, from \eqref{eq:coarse:columns}, we know $M^{t}_{k-t} \succeq M^{t}_{k+1-t} \succeq M^{t}_{k+2-t}$. It follows that
\[M^{t+1}_{k+1-t} = \big\{\{w_{k+1-t,j'} \mid j' \in I_i\} \mid i \in [\ell]\big\}. 
\]
Moreover, since $V_{k+1-t} \subseteq N(V_{k+2-t})$ and $V_{k+2-t}$ is a union of elements of $\pi_k^t$, the column $V_{k+2-(t+1)} = V_{k+1-t}$ must be a union of elements of $\pi_k^{t+1}$. Hence,
\begin{equation}\label{eq:new:column}
\big\{\{w_{k+2-(t+1),j'} \mid j' \in I_i\} \, \big\vert \, i \in [\ell]\big\} \subseteq \pi_k^{t+1}.
\end{equation} 

Putting \eqref{eq:hypothesis:columns}, \eqref{eq:old:column}, and \eqref{eq:new:column} together, we get
\begin{equation}\label{eq:inner:columns}
\bigcup_{i' \in [t+1]}\big\{\{w_{k+2-i',j'} \mid j' \in I_i\} \, \big\vert \, i \in [\ell]\big\} \subseteq \pi_k^{t+1}.
\end{equation}

Recall that by the induction hypothesis, it holds that $\big\{C \setminus (\bigcup_{i' \in [t]} V_{k+2-i'}) \, \big\vert \, C \in \pi_k^0, C \setminus (\bigcup_{i' \in [t]} V_{k+2-i'}) \neq \emptyset\big\} \subseteq \pi_k^t$. Thus, for every $C' \in \pi_k^t$ and for all vertices $u, v \in \bigcup_{i=0}^{k-t} V_i$ for which there is a $C \in \pi_k^0$ with $u,v \in C$, it holds that $|N(u) \cap C'| = |N(v) \cap C'|$. (To see this, recall that $N(u), N(v) \subseteq \bigcup_{i=0}^{k+1-t} V_i$.) This implies $\chi^{t+1}_{G_k}(u) = \chi^{t+1}_{G_k}(v)$. Together with \eqref{eq:first:column:outside}, this yields that 
\[\bigg\{C \setminus \Big(\bigcup_{i' \in [t+1]} V_{k+2-i'}\Big) \, \bigg\vert \,  C \in \pi_k^{0}, C \setminus \Big(\bigcup_{i' \in [t+1]} V_{k+2-t}\Big) \neq \emptyset\bigg\} \subseteq \pi_k^{t+1},\]
which, together with \eqref{eq:inner:columns}, concludes the proof of the claim.
 \end{claimproof}

  We call the property described in the claim \emph{path propagation from right to left}. The proof for the following statement, \emph{path propagation from left to right}, is completely analogous. Therefore, we skip it.

 \begin{claim}[resume]
 Consider a colouring $\lambda$ of $G_k$ and its induced partition $\pi_k$ of $V(G_k)$. For $t \in \mathbb{N}_0$, let $\pi_k^t$ be the partition induced by $\chi^t_{G_k}$ on input $(G_k, \lambda)$. Suppose $G_k, \lambda, \pi_k$ satisfy the following conditions.
 \begin{enumerate}
  \item There exist $\ell \in [6]$ and $I_1, \dots, I_\ell \subseteq [6]$ such that $\bigcup_{i \in [\ell]} I_i = [6]$ and $I_i \cap I_j = \emptyset$ for $1 \leq i < j \leq \ell$ and for every $i \in [\ell]$, it holds that $\{w_{1,j'} \mid j' \in I_i\} \in \pi_k$. That is, the first column is a union of colour classes with respect to $\lambda$.
  \item $\pi_k^1 = \big\{\{w_{2,j'} \mid j' \in I_i\} \, \big\vert \, i \in [\ell]\big\}  \cup \big\{C \setminus V_{2} \, \big\vert \, C \in \pi_k, C \setminus V_{2} \neq \emptyset\big\}$.
  \item For all $C,C' \subseteq V_{2}$ with $C,C' \in \pi_k^1$, the graph $G_k[C]$ is regular and $G_k[C,C']$ is biregular.
\end{enumerate}
Then for every $t \in [k]$, it holds that
\begin{align*} 
\pi_k^{t} = &\bigcup_{i' \in [t]} \big\{\{w_{i'+1,j'} \mid j' \in I_i\} \, \big\vert \, i \in [\ell]\big\} 
\\&\cup \bigg\{C \setminus \Big(\bigcup_{i' \in [t]} V_{i'+1}\Big) \, \bigg\vert \, C \in \pi_k^0, C \setminus \Big(\bigcup_{i' \in [t]} V_{i'+1}\Big) \neq \emptyset\bigg\}.
  \uenda
\end{align*}
\end{claim}

 We are now ready to analyse the run of Colour Refinement on input $G_k$.  
 Recall that $\pi_0^t$ denotes the partition induced by $\chi_{G_0}^t$ on $V(G_0) \subseteq V(G_k)$. For the following arguments, see also Figure \ref{fig:col:ref:max:iterations}. 

 In $\pi^1_k$, the vertices are distinguished according to their degrees. We can then use path propagation from right to left to deduce that 
 \begin{align*}
 \pi_k^{k+1} = \mathrlap{\pi_0^1}\phantom{\pi_0^{10}} &\cup \big\{V_i \, \big\vert \, i \in [2,k+1]\big\} \quad \text{and thus} \quad  \pi_k^{k+2} = \pi_0^2 \cup \big\{V_i \, \big\vert \, i \in [2,k+1]\big\},
\\\pi_k^{k+3} = \mathrlap{\pi_0^3}\phantom{\pi_0^{10}} &\cup \big\{V_i \, \big\vert \, i \in [2,k+1]\big\} \quad \mathrlap{\text{and also}}\phantom{\text{and thus}} \quad \pi_k^{k+4} = \pi_0^4 \cup \big\{V_i \, \big\vert \, i \in [2,k+1]\big\}.
\intertext{Now path propagation from left to right yields that} 
 \pi_k^{2k+4} = \mathrlap{\pi_0^4}\phantom{\pi_0^{10}} &\cup \big\{\{w_{i,j} \mid j \in [4]\} \, \big\vert \, i \in [2,k\mspace{-2mu}+\mspace{-2mu}1]\big\} \cup \big\{\{w_{i,j} \mid j \in \{5,6\}\} \, \big\vert \, i \in [2,k\mspace{-2mu}+\mspace{-2mu}1]\big\}\\
 \intertext{and $\pi_k^{2k+5} = \pi_0^5 \cup (\pi_k^{2k+4} \setminus \pi_0^4)$. Again using path propagation from right to left, we get that} 
 \pi_k^{3k+6} = \mathrlap{\pi_0^6}\phantom{\pi_0^{10}} &\cup \big\{\{w_{i,j} \mid j \in \{1,2\}\} \, \big\vert \, i \in [2,k+1]\big\} 
 \\&\cup \big\{\{w_{i,j} \mid j \in \{3,4\}\} \, \big\vert \, i \in [2,k+1]\big\}
 \\&\cup \big\{\{w_{i,j} \mid j \in \{5,6\}\} \, \big\vert \, i \in [2,k+1]\big\}\\
 \intertext{and $\pi_k^{3k+7} = \pi_0^7 \cup (\pi_k^{3k+6} \setminus \pi_0^6)$. Similarly, we obtain} 
 \pi_k^{4k+8} = \mathrlap{\pi_0^8}\phantom{\pi_0^{10}} &\cup \big\{\{w_{i,j} \mid j \in \{1,2\}\} \, \big\vert \, i \in [2,k+1]\big\} 
 \\&\cup \big\{\{w_{i,j} \mid j \in \{3,4\}\} \, \big\vert \, i \in [2,k+1]\big\}
 \\&\cup \big\{\{w_{i,j}\} \, \big\vert \, i \in [2,k+1], j \in \{5,6\}\big\},
 \\[1ex]\pi_k^{4k+9} = \mathrlap{\pi_0^9}\phantom{\pi_0^{10}} & \cup (\pi_k^{4k+8} \setminus \pi_0^8),
 \\[1ex]\pi_k^{5k+10} = \pi_0^{10} &\cup \big\{\{w_{i,j} \mid j \in \{1,2\}\} \, \big\vert \, i \in [2,k+1]\big\}
 \\&\cup \big\{\{w_{i,j}\} \, \big\vert \, i \in [2,k+1], j \in [3,6]\big\}, 
 \\[1ex]\pi_k^{5k+11} = \pi_0^{11} &\cup (\pi_k^{5k+10} \setminus \pi_0^{10}),
 \\[1ex]\pi_k^{5k+12} = \pi_0^{12} &\cup (\pi_k^{5k+11} \setminus \pi^{11}),
 \\[1ex]\pi_k^{6k+13} = \pi_0^{13} &\cup \big\{\{w_{i,j}\} \, \big\vert \, i \in [2,k+1], j \in [6]\big\},
 \end{align*}
 which is the discrete partition by the induction assumption for $k=0$.

 This implies that on input $G_k$, Colour Refinement takes $6k+13$ iterations to stabilise and, since $|G_k| = 6k + 14$, it holds that $\WL_1(G_k) = n-1$, where $n = |G_k|$.

 In the third and fourth family from the theorem, the three 0-pairs take on the role of $V_{k+2}$, the $(k+2)$-nd column, which initiates the first path propagation. The proofs for those families are up to index changes essentially analogous to the one just presented. 

 In the fifth and sixth family, there are more 0-pairs. We sketch the splitting. In those families, in $\pi^1$, there is one partition class formed by all vertices contained in 0-pairs. The second partition class contains all other vertices. In $\pi^2$, the vertices contained in the two adjacent 0-pairs as well as the vertices contained in $\max(\prec)$ (i.e.\ the rightmost 0) form a separate partition class, while the class consisting of all vertices not contained in 0-pairs is not split. Now, like in the other families, those three 0-pairs take up the role of $V_{k+2}$, initiate the first path propagation and the proof proceeds similarly as above.
\end{proof}

\begin{corollary}\label{cor:main:infinite}
There are infinitely many $n \in \mathbb{N}$ with $\WL_1(n) = n-1$.
\end{corollary}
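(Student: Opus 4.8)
The plan is to read the corollary off directly from Theorem~\ref{thm:infinite:families}. First I would single out one of the infinite families listed there, for concreteness $\{\mathrm{S1^{\mathit{k}}001^{\mathit{k}}X1X1^{\mathit{k}}0} \mid k \in \mathbb{N}_0\}$, and for each $k \in \mathbb{N}_0$ put $G_k \coloneqq G(\mathrm{S1^{\mathit{k}}001^{\mathit{k}}X1X1^{\mathit{k}}0})$; for instance, $G_3$ is the graph of Figure~\ref{fig:col:ref:max:iterations}. Counting letters, the defining string has length $3k+7$, and since $G(\Xi)$ contains exactly two vertices per letter of $\Xi$, we get $|G_k| = 6k+14$.

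Next I would invoke Theorem~\ref{thm:infinite:families} to conclude that every $G_k$ is a long-refinement graph, which by definition means $\WL_1(G_k) = |G_k|-1 = 6k+13$. Since $\WL_1(n)$ is defined as the maximum of $\WL_1(H)$ over all $n$-vertex graphs $H$, and the trivial inequality $\WL_1(n) \leq n-1$ holds for every $n$, the existence of the single graph $G_k$ on $n \coloneqq 6k+14$ vertices already forces $\WL_1(6k+14) = 6k+14-1$.

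Finally, the map $k \mapsto 6k+14$ is strictly increasing, so $\{6k+14 \mid k \in \mathbb{N}_0\} = \{14, 20, 26, \dots\}$ is an infinite subset of $\mathbb{N}$, and $\WL_1(n) = n-1$ holds for every element $n$ of this set; this is exactly the assertion. I do not expect any real obstacle here: the only two points that need a line of justification are the arithmetic identity $|G_k| = 6k+14$ and the remark that a single witness graph of each order suffices because $\WL_1(n)$ is a maximum over $n$-vertex graphs. (A denser set of attainable orders could be obtained by combining several of the families from Theorem~\ref{thm:infinite:families}, but this is not needed for the bare infinitude claimed here and is pursued in Theorems~\ref{thm:main:infinite} and~\ref{thm:main:n-2}.)
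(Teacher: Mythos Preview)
Your proposal is correct and matches the paper's approach: the corollary is stated without proof immediately after Theorem~\ref{thm:infinite:families}, relying on exactly the computation you spell out (in fact the identity $|G_k|=6k+14$ and $\WL_1(G_k)=6k+13$ already appear inside the proof of that theorem). Your write-up simply makes explicit what the paper leaves implicit.
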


\begin{corollary}\label{cor:even:max:iterations}
 For every even $n \in \mathbb{N}_{\geq 12}$ such that $n=12$ or $n \bmod 18 \notin \{6,12\}$, there is a long-refinement graph $G$ with $|G| = n$. The graph $G$ can be chosen to satisfy $\deg(G) = \{2,3\}$.
\end{corollary}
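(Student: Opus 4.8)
The plan is to obtain Corollary~\ref{cor:even:max:iterations} from Theorem~\ref{thm:infinite:families} by a short counting argument. The key observation is that a string $\Xi$ of length $\ell$ produces a graph $G(\Xi)$ with vertex set $\{v_{i,j}\mid i\in[\ell],\ j\in[2]\}$, hence of order $2\ell$, and that by construction $\deg(G(\Xi))=\{2,3\}$ (indeed, every such graph has at least one pair encoded by the letter $\mathrm{0}$ and the pair encoded by $\mathrm{S}$, so both degrees $2$ and $3$ occur, and by Corollary~\ref{cor:two:degrees} no other degree can). Thus every graph named in Theorem~\ref{thm:infinite:families} is a long-refinement graph $G$ with $\deg(G)=\{2,3\}$ whose order is twice the length of its string, and it remains only to compute, for each of the six families, the length of its members as a function of the parameter $k\in\mathbb{N}_0$ and to check that the resulting orders cover every even $n\geq 12$ with $n=12$ or $n\bmod 18\notin\{6,12\}$.

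First I would do the (routine) length counts. One gets: $\mathrm{S011XX}$ has length $6$; the second family $\mathrm{S1^{\mathit{k}}001^{\mathit{k}}X1X1^{\mathit{k}}0}$ has length $3k+7$ (consistent with the equality $|G_k|=6k+14$ already used in the proof of Theorem~\ref{thm:infinite:families}); the third and fourth families $\mathrm{S1^{\mathit{k}}11001^{\mathit{k}}XX1^{\mathit{k}}0}$ and $\mathrm{S1^{\mathit{k}}0011^{\mathit{k}}XX1^{\mathit{k}}10}$ each have length $3k+8$; and the fifth and sixth families $\mathrm{S011(011)^{\mathit{k}}00(110)^{\mathit{k}}XX(011)^{\mathit{k}}0}$ and $\mathrm{S(011)^{\mathit{k}}00(110)^{\mathit{k}}1X0X1(011)^{\mathit{k}}0}$ each have length $9k+9$. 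Doubling, the orders realised by the graphs of Theorem~\ref{thm:infinite:families} are exactly
\[
\{12\}\ \cup\ \{\,6k+14\mid k\in\mathbb{N}_0\,\}\ \cup\ \{\,6k+16\mid k\in\mathbb{N}_0\,\}\ \cup\ \{\,18k+18\mid k\in\mathbb{N}_0\,\}.
\]

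Then I would finish by a case distinction on $n\bmod 6$. Let $n\geq 12$ be even with $n=12$ or $n\bmod 18\notin\{6,12\}$. If $n=12$, take $G(\mathrm{S011XX})$. Otherwise $n\geq 14$. If $n\equiv 2\pmod 6$, then $n=6k+14$ with $k=(n-14)/6\in\mathbb{N}_0$, so the second family applies. If $n\equiv 4\pmod 6$, then $n\geq 16$ and $n=6k+16$ with $k=(n-16)/6\in\mathbb{N}_0$, so the third family applies. If $n\equiv 0\pmod 6$, then $n\neq 12$ forces $n\geq 18$, and $n\bmod 18\in\{0,6,12\}$ together with $n\bmod 18\notin\{6,12\}$ forces $n\equiv 0\pmod{18}$, hence $n=18k+18$ with $k=(n-18)/18\in\mathbb{N}_0$ and the fifth family applies. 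In every case the relevant string lies in one of the families of Theorem~\ref{thm:infinite:families}, so $G(\Xi)$ is a long-refinement graph of order $n$ with $\deg(G(\Xi))=\{2,3\}$, which is the claim.

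The honest assessment is that there is no genuine obstacle: all of the substance sits in Theorem~\ref{thm:infinite:families}, which is invoked here as a black box, and the corollary is pure bookkeeping. The only points requiring a little care are getting the string lengths exactly right — so that the three arithmetic progressions $6k+14$, $6k+16$, $18k+18$ together with the sporadic value $12$ jointly miss precisely the residues $6$ and $12$ modulo $18$ among even $n\geq 12$, with $n=12$ as the single exception — and verifying the small boundary cases $n\in\{12,14,16,18\}$ by hand.
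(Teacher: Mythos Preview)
Your proposal is correct and follows essentially the same approach as the paper: both arguments read off the orders of the graphs in Theorem~\ref{thm:infinite:families} (yours by explicitly computing the string lengths, the paper's by directly stating the resulting residue classes modulo $6$ and $18$) and then partition the even $n\geq 12$ into the cases $n=12$, $n\equiv 2\pmod 6$, $n\equiv 4\pmod 6$, and $n\equiv 0\pmod{18}$. The only cosmetic difference is that your justification of $\deg(G(\Xi))=\{2,3\}$ goes via Corollary~\ref{cor:two:degrees}, whereas one could read it off directly from the edge set in the definition of $G(\Xi)$; either way the conclusion is immediate.
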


\begin{proof}
The string representation S011XX covers $n=12$. The first infinite family from Theorem \ref{thm:infinite:families} covers all even $n \in \mathbb{N}_{\geq 14}$ with $n = 2 \mod 6$, i.e.\ with $n \bmod 18 \in \{2,8,14\}$. The second and the third infinite family both cover all even $n \in \mathbb{N}_{\geq 16}$ with $n = 4 \mod 6$, i.e.\ with $n \bmod 18 \in \{4,10,16\}$. The fourth and the fifth infinite family cover all even $n \in \mathbb{N}_{\geq 18}$ with $n = 0 \mod 18$. Thus, among the even graph orders larger than 10, only the ones with $n \bmod 18 \in \{6,12\}$ remain not covered.
\end{proof}

We now turn to the long-refinement graphs $G$ of odd order with vertex degrees in $\{2,3\}$. If the graph has odd size, we cannot represent it just with pairs. For this, we relax Assumption \ref{ass:max:iterations} as follows.

\begin{assumption}\label{ass:odd:max:iterations}
$G$ is a long-refinement graph with $\deg(G) = \{2,3\}$ and such that there is an $i \in \mathbb{N}_0$ for which $\pi^i$ contains only pairs and at most one singleton.
\end{assumption}

We maintain the vocabulary and notation from the long-refinement graphs of even order, i.e.\ 0, 1, S, X will be used in the same way as before. However, in order to fully describe the odd-size graphs via strings, we have to extend the string alphabet by fresh letters $\hat 1 $ and $\mathrm{\hat X}$, which represent particular pairs with attached vertices as follows. For a string $\hat \Xi \colon [\ell] \rightarrow \mathrm{\{0,1,S,X, \hat 1, \hat X\}}$, we define the \emph{base string} $\Xi$ as the string obtained by removing hats, i.e.\ by replacing every $\hat 1$ with a $1$ and every $\mathrm{\hat X}$ with an X. Let $I(\hat \Xi) \subseteq [\ell]$ be the set of positions $i$ with $\hat \Xi(i) \in \{\hat 1, \hat X\}$. If in the base graph $G(\Xi)$, every vertex pair corresponding to a position in $I(\hat \Xi)$ (a \emph{hat vertex pair}) is adjacent, we call $\hat \Xi$ a \emph{hat string}.

 Similarly as for the even-size long-refinement graphs, to every hat string $\hat\Xi$, we assign a graph $G(\hat\Xi)$. We obtain the graph $G(\hat \Xi)$ by subdividing in $G(\Xi)$ each edge connecting a hat vertex pair with a new fresh vertex, which we call a \emph{hat}. For a hat $\hat v$, we call the neighbourhood $N(\hat v) \subseteq V\big(G(\hat \Xi)\big)$ the \emph{hat base} of $\hat v$. Note that every vertex in the hat base has degree 3 since it already has degree 3 in $G(\Xi)$ (cf.\ Notation \ref{not:string}). Also, a hat always has degree 2 and thus, with respect to $\chi^1_{G(\hat \Xi)}$, it has a different colour than its hat base. 

  Graphically, we represent a hat with a loop attached to the corresponding hat vertex pair, which we subdivide by inserting a small vertex that represents the hat (see Figure \ref{fig:sa11xx}). It is not difficult to see that every graph $G(\hat \Xi)$ corresponding to a hat string $\hat \Xi$ has exactly one hat and that the hat is the first vertex forming a singleton colour class during the execution of Colour Refinement on $G(\hat \Xi)$. Thus, $|G(\hat \Xi)| = |G(\Xi)| + 1$.

\begin{figure}
\newcommand\vDist{.40cm}
\newcommand\graphwidth{.3\textwidth}
\begin{tikzpicture}
\node[anchor=north west,text width=\textwidth-1pt,inner sep=0] (line1)
  {\includegraphics[page=7,width=\graphwidth]{graph7-s011xx.pdf} \hfill
   \includegraphics[page=8,width=\graphwidth]{graph7-s011xx.pdf} \hfill
   \includegraphics[page=9,width=\graphwidth]{graph7-s011xx.pdf}};
\draw ($(line1.south west)+(0,-\vDist/2)$) -- ($(line1.south east)+(0,-\vDist/2)$);
\node[anchor=north west,text width=\textwidth-1pt,inner sep=0] (line2) at ($(line1.south west)-(0,\vDist)$)
  {\includegraphics[page=10,width=\graphwidth]{graph7-s011xx.pdf} \hfill
   \includegraphics[page=11,width=\graphwidth]{graph7-s011xx.pdf} \hfill
   \includegraphics[page=12,width=\graphwidth]{graph7-s011xx.pdf}};
\draw ($(line2.south west)+(0,-\vDist/2)$) -- ($(line2.south east)+(0,-\vDist/2)$);
\node[anchor=north west,text width=\textwidth-1pt,inner sep=0] (line3) at ($(line2.south west)-(0,\vDist)$)
  {\includegraphics[page=13,width=\graphwidth]{graph7-s011xx.pdf}};
\end{tikzpicture}
\caption{A visualisation of the graph with string representation $\mathrm{S\hat 111XX}$ and the evolution of the colour classes in the first 6 Colour Refinement iterations on the graph.}\label{fig:sa11xx}
\end{figure}

\begin{theorem}\label{thm:odd:size:max:iterations}
For every string $\hat\Xi$ contained in the following sets, the graph $G(\hat \Xi)$ is a long-refinement graph.
\begin{itemize}
  \item $\{\mathrm{S\hat 111XX}\}$
  \item $\{\mathrm{S0X1\hat X}\} \cup \{\mathrm{S1^{\mathit{k}}1011^{\mathit{k}}X1X1^{\mathit{k}}\hat 1} \mid k \in \mathbb{N}_0\}$
  \item $\{\mathrm{S110X\hat X}\} \cup \{\mathrm{S111^{\mathit{k}}1011^{\mathit{k}}XX1^{\mathit{k}}\hat 1} \mid k \in \mathbb{N}_0\}$
  \item $\{\mathrm{S1^{\mathit{k}}01^{\mathit{k}}1XX1^{\mathit{k}}\hat 1} \mid k \in \mathbb{N}_0\}$
  \item $\{\mathrm{S(011)^{\mathit{k}}00(110)^{\mathit{k}}X\hat 1X(011)^{\mathit{k}}0} \mid k \in \mathbb{N}_0\}$
\end{itemize}
\end{theorem}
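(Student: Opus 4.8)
The plan is to mirror the proof strategy already carried out for Theorem~\ref{thm:infinite:families}, now for the hat-string families, treating each bullet by induction on $k$ (with the small exceptional strings such as $\mathrm{S0X1\hat X}$ or $\mathrm{S110X\hat X}$ serving as additional base cases that are verified directly). The key observation that makes this essentially a bookkeeping exercise is the structural fact established in the paragraph preceding the theorem: for a hat string $\hat\Xi$ with base string $\Xi$, the graph $G(\hat\Xi)$ is obtained from $G(\Xi)$ by subdividing exactly the edges inside the hat vertex pairs, each subdivision vertex (the hat) has degree~$2$, every vertex in a hat base has degree~$3$, and there is exactly one hat. So in $\pi^1$ the hat is distinguished from everything else together with the degree-$2$ $0$-pairs (if any), and $\pi^1$ refines $\pi^2$ by splitting off the hat as a singleton. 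Thus after at most two iterations we are in the situation of Assumption~\ref{ass:odd:max:iterations}, and from that point on the colour-class evolution of $G(\hat\Xi)$ runs in lockstep with that of $G(\Xi)$ on the remaining pairs, shifted by the one extra iteration contributed by isolating the hat.

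Concretely, for the families built around the $1^k$-blocks I would reuse the two path-propagation claims (from right to left and from left to right) from the proof of Theorem~\ref{thm:infinite:families} verbatim: the subgraph $G_k[V]$ spanned by the $1^k$-substrings is identical in $G(\hat\Xi)$ and in the even-size analogue, so those claims apply unchanged, and each traversal of a $1^k$-block still costs $k$ extra iterations. The only new ingredient is to re-derive the short initial and final ``boundary'' partitions $\pi_0^1,\pi_0^2,\dots$ for the base graph of each family — e.g.\ for $\mathrm{S0X1\hat X}$, $\mathrm{S110X\hat X}$, $\mathrm{S1^{\mathit{k}}01^{\mathit{k}}1XX1^{\mathit{k}}\hat 1}$ with $k=0$, and $\mathrm{S(011)^0 00(110)^0 X\hat 1 X(011)^0 0}$ — and check by hand that the hat, being the unique degree-$2$ vertex adjacent to a hat base of degree-$3$ vertices, becomes a singleton after one iteration, that no two colour classes are created in any single iteration (invoking Proposition~\ref{prop:max:iterations} and Corollary~\ref{cor:two:degrees}), and that the two $\mathrm{X}$/$\mathrm{\hat X}$ pairs and the $\mathrm{S}$-pair split in the required order so that the splitting of $\mathrm{S}$ is the last non-trivial event and then propagates linearly through $\prec$, contributing $\ell$ further iterations where $\ell$ is the string length. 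Summing: length-$\ell$ hat string $\Rightarrow$ $|G(\hat\Xi)| = 2\ell+1$ vertices and $2\ell$ iterations, i.e.\ $\WL_1(G(\hat\Xi)) = |G(\hat\Xi)|-1$.

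For the families that start from several adjacent $0$-pairs (the last bullet, and the ``$\mathrm{S}111^k1011^k\dots$'', ``$\mathrm{S}1^k1011^k\dots$'' shapes), I would follow exactly the endgame sketched at the close of the proof of Theorem~\ref{thm:infinite:families}: in $\pi^1$ all vertices in $0$-pairs (plus the hat base situation) form one class, in $\pi^2$ the block of adjacent $0$-pairs together with the rightmost $0$ becomes its own class, and then those three $0$-pairs play the role of the column $V_{k+2}$ that seeds the first right-to-left path propagation; thereafter the argument is identical up to index shifts. I would write out one family in full detail (the cleanest is probably $\{\mathrm{S1^{\mathit{k}}1011^{\mathit{k}}X1X1^{\mathit{k}}\hat 1}\}$, which is the hat-analogue of the family treated in detail above) and remark that the others differ only by relabelling of columns and by the position of the hat, exactly as the even-size theorem does.

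The main obstacle is purely combinatorial care rather than any conceptual difficulty: one must verify, for each family, that \emph{no} iteration ever splits a class into more than two pieces or produces two new classes at once — because a single such misstep would, by Proposition~\ref{prop:max:iterations}, destroy the long-refinement property — and that the $0$-pairs, the $\mathrm{S}$-pair, the two $\mathrm{X}$-type pairs, and the hat all split in precisely the scheduled order, with the hat first and the linear propagation out of $\mathrm{S}$ last. The delicate point specific to the hat setting is confirming that inserting the subdivision vertex does not accidentally let the colour of a hat base vertex be refined one step earlier than in $G(\Xi)$: this is handled by noting that all vertices of a hat base retain degree~$3$ and that the hat's colour (degree~$2$, adjacent only to two degree-$3$ vertices of the same pair) is determined in the first iteration and then never influences the relative refinement of the rest, so the correspondence $\pi^{i+1}_{G(\hat\Xi)}\!\restriction_{\text{pairs}} = \pi^{i}_{G(\Xi)}$ (after the hat is singled out) is exact. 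Once that correspondence is in place the iteration count follows mechanically from Theorems~\ref{thm:infinite:families} and the two path-propagation claims.
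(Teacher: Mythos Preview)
Your overall plan---reduce to the even-size argument via a correspondence with a base-string graph and reuse the path-propagation claims---is the same as the paper's. However, the specific mechanism you describe for isolating the hat is wrong, and this breaks the bookkeeping for every family except the first bullet.

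You assert that ``in $\pi^1$ the hat is distinguished from everything else together with the degree-$2$ $0$-pairs (if any), and $\pi^1$ refines $\pi^2$ by splitting off the hat as a singleton'', and later that ``the hat \dots\ becomes a singleton after one iteration''. This is false whenever the hat string contains a $0$-pair, which is the case for all of the infinite families. Consider $\mathrm{S0X1\hat X}$: the hat $\hat v$ and the two vertices of the $0$-pair all have degree~$2$, and each of them has exactly two degree-$3$ neighbours, so they remain in a common class through $\pi^1$, $\pi^2$, and $\pi^3$; the hat only becomes a singleton in $\pi^4$, after the $\mathrm{S}$-pair has been separated and the hat can be distinguished from the $0$-pair by adjacency to it. The paper's proof sketch works this example out explicitly. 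The same phenomenon occurs in every family with $0$-pairs (e.g.\ the last bullet has many), so your ``one extra iteration'' rule and the claimed correspondence $\pi^{i+1}_{G(\hat\Xi)}\!\restriction_{\text{pairs}} = \pi^{i}_{G(\Xi)}$ do not hold there.

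A second, related slip: even for $\mathrm{S\hat 111XX}$, the lockstep correspondence is not with the base graph $G(\Xi)=G(\mathrm{S111XX})$ but with $G(\mathrm{S011XX})$, because deleting the hat from $G(\hat\Xi)$ turns the hat-base pair into a pair of degree-$2$ vertices---i.e.\ a $0$-pair, not a $1$-pair. The base string $\Xi$ is only a device for constructing $G(\hat\Xi)$; it need not itself encode a long-refinement graph, and the refinement of $G(\hat\Xi)$ does not track $G(\Xi)$.

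To fix the argument you must, for each family separately, trace when the hat actually detaches from the class of $0$-pair vertices (this depends on the local neighbourhood structure and can take several iterations), verify that this detachment costs exactly one net iteration relative to a suitable even-size comparison graph, and only then invoke path propagation. The paper does precisely this on the two illustrative examples and declares the rest analogous.
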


\begin{proof}[Proof sketch]
The proof techniques for the infinite families are very similar to the ones presented for the families from Theorem \ref{thm:infinite:families}. Therefore, we only sketch the proof on two concrete examples containing $\hat 1$ and $\hat X$, respectively. 
To be able to refer to vertices explicitly, recall the formal definition of $G \coloneqq G(\Xi)$ from Section \ref{sec:encodings}. Since for a hat string $\hat \Xi$, it holds that $|G(\hat \Xi)| = |G(\Xi)| + 1$, we can use the same indexing of vertices, additionally letting $\hat v$ be the unique hat in $G(\hat \Xi)$. 

In the graph $G \coloneqq G(\mathrm{S\hat 111XX})$ (cf.\ Figure \ref{fig:sa11xx}), the only vertex of degree 2 is $\hat v$. Thus, in $\pi^1$, it forms a singleton colour class. In $\pi^2$, the hat base $\{v_{2,1}, v_{2,2}\}$ forms a new colour class. 

In general, for $i \in \mathbb{N}$, we have that $\pi^i_G = \pi^{i-1}_{G'} \cup \{\hat v\}$, where $G' = G\mathrm{(S011XX)}$ (cf.\ Theorem \ref{thm:infinite:families} and Figure \ref{fig:s011xx}).

Thus,
 \begin{align*}
 \pi^1 &= \big\{\{\hat v\}, V(G) \setminus \{\hat v\}\big\},\\
 \pi^2 &= \big\{\{\hat v\},\{v_{2,1}, v_{2,2}\}, V(G) \setminus \{\hat v, v_{2,1}, v_{2,2}\}\big\},
 \\\pi^3 &= \big\{\{\hat v\},\{v_{2,1}, v_{2,2}\}, \{v_{i,j} \mid i \in \{1,3\}, j \in [2]\}, \{v_{i,j} \mid i \in [4,6], j \in [2]\}\big\},\\
 \pi^4 &= \big\{\{\hat v\},\{v_{1,1}, v_{1,2}\}, \{v_{2,1}, v_{2,2}\}, \{v_{3,1}, v_{3,2}\}, \{v_{i,j} \mid i \in [4,6], j \in [2]\}\big\},\\ 
 \pi^5 &= \big\{\{\hat v\},\{v_{1,1}, v_{1,2}\}, \{v_{2,1}, v_{2,2}\}, \{v_{3,1}, v_{3,2}\}, \{v_{4,1}, v_{4,2}\}, \{v_{i,j} \mid i \in [5,6], j \in [2]\}\big\},\\
  \pi^6 &= \big\{\{\hat v\}\big\} \cup \big\{\{v_{i,j} \mid j \in [2]\} \mid i \in [6]\big\}.
  \end{align*}

Now in 6 further iterations, the splitting of the pairs is propagated linearly according to the order $\prec$.

Next, let $G$ be the graph $G(\mathrm{S0X1\hat X})$, i.e.\ a member of the first infinite family. It has three vertices of degree 2, namely $\hat v$, $v_{2,1}$, and $v_{2,2}$, which therefore form a colour class in $\pi^1$. Also, the vertices contained in the 1-pair are the only vertices of degree 3 that are not adjacent to any vertex of degree 2. Thus, 
 \begin{align*}
 \pi^2 &= \big\{\{\hat v,v_{2,1},v_{2,2}\}, \{v_{4,1},v_{4,2}\}, V(G) \setminus \{\hat v,v_{2,1},v_{2,2},v_{4,1},v_{4,2}\}\big\},\\
\intertext{and similarly,}
 \pi^3 &= \big\{\{\hat v,v_{2,1},v_{2,2}\}, \{v_{4,1},v_{4,2}\}, \{v_{1,1},v_{1,2}\}, \{v_{i,j} \mid i \in \{3,5\}, j \in [2]\}\big\}.\\
\intertext{Now the hat forms a singleton since, in contrast to the vertices of the 0-pair, it is not adjacent to any vertex in the S-pair. We obtain:}
 \pi^4 &= \big\{\{\hat v\}, \{v_{2,1},v_{2,2}\}, \{v_{4,1},v_{4,2}\}, \{v_{1,1},v_{1,2}\}, \{v_{i,j} \mid i \in \{3,5\}, j \in [2]\}\big\},\\
 \pi^5 &= \big\{\{\hat v\}\big\} \cup \big\{\{v_{i,j} \mid j \in [2]\} \mid i \in [5]\big\}.
  \end{align*}
Then in 5 further iterations, the splitting of the pairs is propagated linearly according to the order $\prec$.
\end{proof}

\begin{corollary}\label{cor:odd:max:iterations}
 For every odd $n \in \mathbb{N}_{\geq 11}$ with $n \bmod 18 \notin \{3,9\}$, there is a long-refinement graph $G$ with $|G| = n$. The graph $G$ can be chosen to satisfy $\deg(G) = \{2,3\}$.
\end{corollary}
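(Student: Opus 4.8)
The plan is to proceed exactly as in the proof of Corollary~\ref{cor:even:max:iterations}, this time feeding it the graph families produced by Theorem~\ref{thm:odd:size:max:iterations}. The key bookkeeping fact is that for a hat string $\hat\Xi$ of length $\ell$ the base graph $G(\Xi)$ has $2\ell$ vertices (two per string position), while $G(\hat\Xi)$ adds exactly one hat, so $|G(\hat\Xi)| = 2\ell + 1$; moreover each such graph has degree set $\{2,3\}$, its degree-$2$ vertices being the $0$-pairs together with the hat and all remaining vertices having degree $3$. Hence it suffices to (i) read off the base-string length of each family as an affine function of its parameter $k$, (ii) translate this into the set of realised orders, and (iii) check that these sets together cover every odd $n \in \mathbb{N}_{\geq 11}$ with $n \bmod 18 \notin \{3,9\}$, i.e.\ with $n \bmod 18 \in \{1,5,7,11,13,15,17\}$.

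I would organise the count by residue modulo $6$. For $n \equiv 1 \pmod 6$, the family $\{\mathrm{S1^{\mathit{k}}01^{\mathit{k}}1XX1^{\mathit{k}}\hat 1}\}$ has base length $3k+6$, hence order $6k+13$, so it realises $13,19,25,\dots$, that is every odd $n \geq 13$ with $n \equiv 1 \pmod 6$, equivalently every $n$ with $n \bmod 18 \in \{1,7,13\}$. For $n \equiv 5 \pmod 6$, the family $\{\mathrm{S1^{\mathit{k}}1011^{\mathit{k}}X1X1^{\mathit{k}}\hat 1}\}$ has base length $3k+8$, hence order $6k+17$, realising $17,23,29,\dots$; together with the standalone string $\mathrm{S0X1\hat X}$ of base length $5$ (order $11$) this gives every odd $n \geq 11$ with $n \equiv 5 \pmod 6$, equivalently $n \bmod 18 \in \{5,11,17\}$. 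For $n \equiv 3 \pmod 6$, the family $\{\mathrm{S(011)^{\mathit{k}}00(110)^{\mathit{k}}X\hat 1X(011)^{\mathit{k}}0}\}$ contains three length-$3k$ blocks, so its base length is $9k+7$ and its order is $18k+15$, realising $15,33,51,\dots$, that is every odd $n \geq 15$ with $n \bmod 18 = 15$; the two residues $3$ and $9$ modulo $18$ that are missed here are exactly those excluded by hypothesis. (The remaining listed families, e.g.\ $\{\mathrm{S\hat 111XX}\}$ and $\{\mathrm{S110X\hat X}\} \cup \{\mathrm{S111^{\mathit{k}}1011^{\mathit{k}}XX1^{\mathit{k}}\hat 1}\}$, give orders already covered above and are not needed for the count.)

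Since every odd integer is congruent to $1$, $3$, or $5$ modulo $6$, the three cases together exhaust all odd $n \in \mathbb{N}_{\geq 11}$ with $n \bmod 18 \notin \{3,9\}$, and each witness is a graph of the form guaranteed by Theorem~\ref{thm:odd:size:max:iterations}, in particular with degree set $\{2,3\}$; this yields the corollary. I expect no genuinely hard step: the only place demanding care is the routine-but-error-prone parsing of each string encoding into its base-string length, and the verification that the smallest orders in each residue class (down to $n = 11$) are actually hit --- which is precisely why Theorem~\ref{thm:odd:size:max:iterations} lists the extra standalone strings such as $\mathrm{S0X1\hat X}$ alongside the $k$-parametrised families, since otherwise there would be a gap between a family's smallest member (at $k=0$) and the low orders that must be covered.
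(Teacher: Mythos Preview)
Your proposal is correct and follows essentially the same approach as the paper's own proof: both read off the orders realised by each family in Theorem~\ref{thm:odd:size:max:iterations} and check that together they hit every odd $n \geq 11$ with $n \bmod 18 \notin \{3,9\}$. Your bookkeeping is accurate, and your additional observation that the standalone string $\mathrm{S\hat 111XX}$ and the family containing $\mathrm{S110X\hat X}$ are redundant (since the family $\{\mathrm{S1^{\mathit{k}}01^{\mathit{k}}1XX1^{\mathit{k}}\hat 1}\}$ already covers $n=13$ at $k=0$) is correct---the paper uses them anyway but they are not needed for the count.
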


\begin{proof}
The string representation $\mathrm{S\hat 111XX}$ covers $n=13$. The first infinite family covers all odd $n \in \mathbb{N}_{\geq 11}$ with $n = 5 \mod 6$, i.e.\ with $n \bmod 18 \in \{5,11,17\}$. The second and the third infinite family both cover all all odd $n \in \mathbb{N}_{\geq 13}$ with $n = 1 \mod 6$, i.e.\ with $n \bmod 18 \in \{1,7,13\}$. The fourth infinite family covers all odd $n \in \mathbb{N}_{\geq 15}$ with $n = 15 \mod 18$. Thus, among the odd orders larger than 10, only the ones with $n \bmod 18 \in \{3,9\}$ family are skipped.
\end{proof}

We summarise the results from Corollaries \ref{cor:even:max:iterations} and \ref{cor:odd:max:iterations}.

 \begin{corollary}\label{cor:max:iterations:23}
 For every $n \in \mathbb{N}_{\geq 11}$ such that $n = 12$ or $n \bmod 18 \notin \{3,6,9,12\}$, there is a long-refinement graph $G$ with $|G| = n$. The graph $G$ can be chosen to satisfy $\deg(G) = \{2,3\}$.
 \end{corollary}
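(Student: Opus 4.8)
The plan is to derive Corollary~\ref{cor:max:iterations:23} simply by combining the even-order and odd-order results, Corollaries~\ref{cor:even:max:iterations} and~\ref{cor:odd:max:iterations}, without constructing any new graph family. First I would fix $n \in \mathbb{N}_{\geq 11}$ satisfying the hypothesis, namely $n = 12$ or $n \bmod 18 \notin \{3,6,9,12\}$, and distinguish the two cases according to the parity of $n$.

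For even $n$, the bound $n \geq 11$ already forces $n \geq 12$. Moreover $n \bmod 18$ is then even, so it can never lie in $\{3,9\}$; hence the hypothesis $n \bmod 18 \notin \{3,6,9,12\}$ is, in this case, equivalent to $n \bmod 18 \notin \{6,12\}$. Adding back the explicitly permitted value $n = 12$, this is precisely the hypothesis of Corollary~\ref{cor:even:max:iterations}, which therefore provides a long-refinement graph $G$ with $|G| = n$ and $\deg(G) = \{2,3\}$.

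For odd $n$, the value $n = 12$ is impossible, and $n \bmod 18$ is odd, so it can never lie in $\{6,12\}$; hence the hypothesis reduces to $n \in \mathbb{N}_{\geq 11}$ with $n \bmod 18 \notin \{3,9\}$, which is exactly the hypothesis of Corollary~\ref{cor:odd:max:iterations}. That corollary then yields the required graph, again with vertex degrees $2$ and $3$.

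The argument is routine; the only delicate point is the residue-class bookkeeping modulo $18$ — one has to check that the parity of $n$ automatically removes the ``wrong-parity'' forbidden residues, so that the combined exclusion set $\{3,6,9,12\}$ decomposes exactly into $\{6,12\}$ for even $n$ and $\{3,9\}$ for odd $n$, and that the threshold $n \geq 11$ specialises correctly ($n \geq 12$ in the even case, $n \geq 11$ in the odd case). I do not expect any genuine obstacle beyond performing this verification carefully.
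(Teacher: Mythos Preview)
Your proposal is correct and follows exactly the paper's approach: the paper's proof consists of the single sentence ``We summarise the results from Corollaries~\ref{cor:even:max:iterations} and~\ref{cor:odd:max:iterations}'', and you have merely spelled out the (straightforward) residue-class bookkeeping that this summarisation entails.
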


 The following lemma allows to cover more graph sizes.

\begin{lemma}\label{lem:extension:max:iteration:number}
Let $n \in \mathbb{N}$ be arbitrary. Suppose there is a long-refinement graph $G$ such that $|G| = n$. If there is a $d \in \mathbb{N}$ with $\deg(G) = \{d,d+1\}$ such that $|\{v \in V(G) \, \vert \, \deg(v) = d\}| \neq d+1$, then there is also a long-refinement graph $G'$ with $|G'| = n+1$.
\end{lemma}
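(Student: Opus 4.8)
The plan is to build $G'$ from $G$ by adding one new vertex $u$ and joining it to every vertex of degree $d$ in $G$. Write $V_d \coloneqq \{v \in V(G) \mid \deg(v) = d\}$ and $V_{d+1} \coloneqq \{v \in V(G) \mid \deg(v) = d+1\}$; since every long-refinement graph is monochromatic, $\pi^0_G$ is the unit partition and $\pi^1_G = \{V_d, V_{d+1}\}$. In $G'$ every vertex of $V_d$ has degree $d+1$, every vertex of $V_{d+1}$ still has degree $d+1$, and $u$ has degree $|V_d|$. Because $|V_d| \neq d+1$ by hypothesis, $G'$ has exactly two vertex degrees and $u$ is the unique vertex of degree $|V_d|$; hence $\pi^1_{G'} = \{\{u\}, V(G)\}$. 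In the next iteration a vertex of $V(G)$ is adjacent to $u$ exactly when it lies in $V_d$, so $V_d$ and $V_{d+1}$ get separated and nothing else happens, giving $\pi^2_{G'} = \{\{u\}, V_d, V_{d+1}\} = \pi^1_G \cup \{\{u\}\}$.

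The core of the argument is the assertion that Colour Refinement on $G'$ imitates the run on $G$ with a one-step delay:
\[\pi^{i+1}_{G'} = \pi^i_G \cup \{\{u\}\} \qquad \text{for every } i \geq 1.\]
I would prove this by induction on $i$, the base case being the computation of $\pi^2_{G'}$ just described. For the inductive step the crucial point is that $V_d$, being a class of $\pi^1_G$, is a union of classes of $\pi^i_G$ for every $i \geq 1$ (refinement only makes partitions finer); therefore any two vertices in the same class of $\pi^i_G$ lie either both in $V_d$ or both in $V_{d+1}$, so $u$ contributes its colour the same number of times (once, respectively never) to their neighbourhood-colour multisets in $G'$. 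Using the induction hypothesis to identify the $\chi^{i+1}_{G'}$-colours on $V(G)$ with the classes of $\pi^i_G$, and using that $u$'s colour is distinct from all of them (it was already separated in $\pi^1_{G'}$), the refinement step restricted to $V(G)$ inside $G'$ reduces exactly to the refinement step on $G$, while the singleton $\{u\}$ can neither split nor merge. This yields $\pi^{i+2}_{G'} = \pi^{i+1}_G \cup \{\{u\}\}$.

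It then remains to read off the iteration number. Since $G$ is a long-refinement graph, by Proposition~\ref{prop:max:iterations} each Colour Refinement iteration on $G$ creates exactly one new class, so $|\pi^i_G| = i+1$ for $0 \leq i \leq n-1$; in particular $\pi^{n-1}_G$ is the discrete partition of $V(G)$ whereas $\pi^{n-2}_G$ is not. Substituting into the displayed claim, $\pi^n_{G'} = \pi^{n-1}_G \cup \{\{u\}\}$ is the discrete partition of $V(G') = V(G) \cup \{u\}$, hence stable, while $\pi^{n-1}_{G'} = \pi^{n-2}_G \cup \{\{u\}\}$ is not discrete. Therefore Colour Refinement on $G'$ stabilises after exactly $n$ iterations, so $\WL_1(G') = n = |G'| - 1$, i.e.\ $G'$ is a long-refinement graph of order $n+1$.

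The main obstacle is the inductive step of the displayed claim: one must verify carefully that attaching $u$ to the entire class $V_d$ neither accelerates nor disturbs the splitting inside $V(G)$. This hinges solely on the monotonicity observation that $V_d$ stays a union of colour classes throughout the run, which makes $u$'s contribution to every relevant neighbourhood multiset uniform across each colour class of $G$. The hypothesis $|V_d| \neq d+1$ is exactly what is needed to ensure $G'$ is not regular and that $u$ forms its own colour class already after the first iteration; if instead $|V_d| = d+1$, then $G'$ would be $(d+1)$-regular and $\WL_1(G') = 0$.
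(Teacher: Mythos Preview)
Your proof is correct and follows essentially the same construction and argument as the paper: add a fresh vertex adjacent to all degree-$d$ vertices, observe that it becomes a singleton in $\pi^1_{G'}$, and then show $\pi^{i+1}_{G'} = \pi^i_G \cup \{\{u\}\}$ for $i \geq 1$. The paper states this inductive identity and dismisses its verification as ``easy to see''; you have supplied the details, in particular the key monotonicity point that $V_d$ remains a union of colour classes throughout the run on $G$, which is exactly what makes $u$'s contribution uniform across each class.
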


\begin{proof}
We can insert an isolated vertex $w$ into $G$ and insert edges from $w$ to every vertex $v \in V(G)$ with $\deg(v) = d$. In the new graph $G'$, the vertex $w$ has a degree other than $d+1$, whereas all other vertices have degree $d+1$. Thus, the colour classes in $\pi^1_{G'}$ are $\{w\}$ and $V(G') \setminus \{w\}$. After the second iteration, the neighbours of $w$ are distinguished from all other vertices, just like they are in $\pi^1_G$. Inductively, it is easy to see that for $i \in \mathbb{N}$, it holds that $\pi^i_{G'} = \pi^{i-1}_{G} \cup \big\{\{w\}\big\}$. Thus, Colour Refinement takes $n-1+1 = n$ iterations to stabilise on $G'$.
\end{proof}

\begin{corollary}\label{cor:all:odd:max:iterations}
 For every odd $n \in \mathbb{N}_{\geq 11}$, there is a long-refinement graph $G$ with $|G| = n$. 
\end{corollary}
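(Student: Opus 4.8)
The plan is to combine Corollary~\ref{cor:max:iterations:23} with Lemma~\ref{lem:extension:max:iteration:number}. Corollary~\ref{cor:max:iterations:23} already produces long-refinement graphs of order $n$ whenever $n=12$ or $n \bmod 18 \notin \{3,6,9,12\}$, and these graphs have $\deg(G) = \{2,3\}$. Restricting attention to odd $n \geq 11$, the only residues left uncovered are $n \bmod 18 \in \{3,9\}$, i.e.\ $n \in \{21,27,39,45,57,63,\dots\}$, which are exactly the odd orders one more than an odd long-refinement order covered by the corollary (since $n-1 \bmod 18 \in \{2,8\}$, and $n-1$ is even). So the strategy is: given an odd $n \geq 11$ with $n \bmod 18 \in \{3,9\}$, take the long-refinement graph $G$ of order $n-1$ provided by Corollary~\ref{cor:max:iterations:23} (note $n-1$ is even, $n-1 \geq 20$, and $(n-1) \bmod 18 \in \{2,8\} \notin \{3,6,9,12\}$), and apply Lemma~\ref{lem:extension:max:iteration:number} with $d=2$ to obtain a long-refinement graph of order $n$.

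To invoke Lemma~\ref{lem:extension:max:iteration:number} I need to check its hypothesis: that $G$ has $\deg(G) = \{d,d+1\}$ with the number of degree-$d$ vertices not equal to $d+1$. Here $d=2$, so I must verify that $G$ does not have exactly $3$ vertices of degree $2$. The graphs from Corollary~\ref{cor:max:iterations:23} in the relevant residue classes are the members of the even-order families of Theorem~\ref{thm:infinite:families} with $(n-1) \bmod 18 \in \{2,8\}$, namely the first family $\{\mathrm{S1^{\mathit{k}}001^{\mathit{k}}X1X1^{\mathit{k}}0}\}$ (which covers $n-1 \equiv 2 \bmod 6$). In the string representation, the number of degree-$2$ vertices equals twice the number of $0$-symbols, so for these graphs the count of degree-$2$ vertices is even (indeed it is $4$ in the first family, coming from the two $0$'s), hence never equal to $3$. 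This confirms the hypothesis of Lemma~\ref{lem:extension:max:iteration:number}, and the lemma yields the desired long-refinement graph of order $n$.

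The main (and essentially only) obstacle is the bookkeeping: making sure that for every odd $n \geq 11$ in the uncovered residue classes, the order $n-1$ genuinely falls in the range and residue classes handled by Corollary~\ref{cor:max:iterations:23}, and that the chosen witness graph has the degree-sequence property required by Lemma~\ref{lem:extension:max:iteration:number}. Concretely: $n \bmod 18 = 3$ gives $(n-1) \bmod 18 = 2$ and $n \bmod 18 = 9$ gives $(n-1) \bmod 18 = 8$, both of which are in $\{2,8,14\} \setminus \{3,6,9,12\}$ and covered (with $\deg = \{2,3\}$) by the first infinite family for $n - 1 \geq 14$; the smallest such odd $n$ is $21$, so $n-1 = 20 \geq 14$ and we are safely in range. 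For all remaining odd $n \geq 11$ the statement is already given directly by Corollary~\ref{cor:odd:max:iterations}. Combining the two cases completes the proof.
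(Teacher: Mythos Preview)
Your proposal is correct and follows essentially the same route as the paper: cover the residues $\{3,9\}\bmod 18$ by applying Lemma~\ref{lem:extension:max:iteration:number} with $d=2$ to graphs of order $n-1$ from the first infinite family of Theorem~\ref{thm:infinite:families}, using that these graphs have an even number of degree-$2$ vertices. One minor slip: the string $\mathrm{S1^{\mathit{k}}001^{\mathit{k}}X1X1^{\mathit{k}}0}$ contains three $0$-symbols, not two, so there are $6$ vertices of degree $2$ rather than $4$; this does not affect your argument, since evenness is all you need.
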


\begin{proof}
By Corollary \ref{cor:odd:max:iterations}, it suffices to provide long-refinement graphs of order $n$ for every odd $n \in \mathbb{N}_{\geq 11}$ with $n \bmod 18 \in \{3,9\}$. We will accomplish this by applying Lemma \ref{lem:extension:max:iteration:number} to suitable graphs of orders $n'$ with $n' \bmod 18 \in \{2,8\}$. Every graph $G$ with a string representation contained in one of the infinite families from Theorem \ref{thm:infinite:families} has an even number of vertices of degree 2. In particular, it satisfies $|\{v \in V(G) \mid \deg(v) = 2\}| \neq 3$. Furthermore, every even graph order larger than 10 not covered by Corollary \ref{cor:even:max:iterations} is a multiple of 6. Hence, since $N \coloneqq \{n \in \mathbb{N}_{\geq 18} \mid n \bmod 18 \in \{2,8\}\}$ contains only even numbers and no multiples of 6, for every $n' \in N$, there is a graph of order $n'$ that satisfies the prerequisites of Lemma \ref{lem:extension:max:iteration:number} with $d = 2$. (Actually, we can cover all of these graph orders with the family $\{\mathrm{S1^{\mathit{k}}001^{\mathit{k}}X1X1^{\mathit{k}}0} \mid k \in \mathbb{N}_0\}$.) 

Thus, applying the lemma, we can construct for every $n \in \mathbb{N}_{\geq 18}$ with $n \bmod 18 \in \{3,9\}$ a long-refinement graph $G'$.  
\end{proof}

Note that, since we apply Lemma \ref{lem:extension:max:iteration:number} to close the gaps, we cannot guarantee anymore that the vertex degrees are 2 and 3, as we could in Corollary \ref{cor:max:iterations:23}.

We are ready to prove Theorem \ref{thm:main:infinite}.

\begin{proof}[Proof of Theorem \ref{thm:main:infinite}]
The theorem follows from combining Corollaries \ref{cor:even:max:iterations} and \ref{cor:all:odd:max:iterations} with Theorem \ref{thm:goedicke}.
\end{proof}

Although the corollary leaves some gaps, we can deduce a new lower bound on the number of Colour Refinement iterations until stabilisation, which is optimal up to an additive constant of 1.

\begin{proof}[Proof of Theorem \ref{thm:main:n-2}]
By Theorem \ref{thm:main:infinite}, we only need to consider the numbers $n \geq 24$ for which $n \bmod 18 \in \{6,12\}$. Since these numbers are all even, for every such $n$, we can use Corollary \ref{cor:all:odd:max:iterations} to obtain a long-refinement graph $G'$ with $|G'| = n-1$. Let $v$ be a fresh vertex not contained in $V(G')$. Now define $G \coloneqq (V,E)$ with $V \coloneqq V(G') \cup \{v\}$ and $E \coloneqq E(G')$. Then $\pi^i_G = \pi^i_{G'} \cup \{\{v\}\}$ for every $i \in \mathbb{N}$. In particular, $\WL_1(G) = \WL_1(G') = n-2$.
\end{proof}

\section{Conclusion}
With Theorem \ref{thm:main:n-2}, it holds for all $n \in \mathbb{N}_{\geq 10}$ that $\WL_1(n) \geq n-2$. In particular, this proves that the trivial upper bound $\WL_1(n) = n-1$ is tight, up to an additive constant of 1.

For infinitely many graph sizes, the graph $G$ can even be chosen to have vertex degrees 2 and 3, as Theorems \ref{thm:infinite:families} and \ref{thm:odd:size:max:iterations} show. 
We applied Lemma \ref{lem:extension:max:iteration:number} to cover some of the remaining sizes. However, no order $n \in \mathbb{N}_{\geq 18}$ with $n \bmod 18 \in \{6,12\}$ is covered by Theorem \ref{thm:infinite:families}. Also, for $|G| \in \{n \in \mathbb{N}_{\geq 18} \mid n \bmod 18 \in \{5,11\}\}$, all the long-refinement graphs $G$ we have found satisfy $|\{v \in V(G) \, \vert \, \deg(v) = 2\}| = 3$ (see the first infinite family in Theorem \ref{thm:odd:size:max:iterations}). Thus, these graphs do not satisfy the prerequisites of Lemma \ref{lem:extension:max:iteration:number}. Note that we cannot apply the construction from the proof if $|\{v \in V(G) \, \vert \, \deg(v) = d\}| = d+1$, since the new graph $G'$ would be $(d+1)$-regular and would thus satisfy $\WL_1(G') = 0$. Hence, it is not clear how to apply our techniques to construct a long-refinement graph of order 24. In fact, our computational results yield that there are no long-refinement graphs with 24 vertices and maximum degree 3. Altogether, the values $n \in \mathbb{N}_{\geq 24}$ with $n \bmod 18 \in\{6,12\}$ are precisely the graph orders for which it remains open whether there is a graph $G$ with $\WL_1(G) = |G|-1$. 

A related question is for which values $d_1 \neq d_2$ there are long-refinement graphs $G$ with $\deg(G) = \{d_1,d_2\}$. It would be nice to know whether we have actually found all long-refinement graphs $G$ with $\deg(G) = \{2,3\}$. Similarly, we can ask for long-refinement graphs when fixing other parameters. For example, since all long-refinement graphs that we found have girth at most 4, it would be interesting to know whether there exists any long-refinement graph with larger girth, or infinite families with unbounded girth.

Also, in the light of the graph isomorphism problem, it is a natural follow-up task to find for each order $n$ pairs of non-isomorphic graphs $G$, $H$ for which it takes $n-1$ Colour Refinement iterations to distinguish the graphs from each other. A first step towards this goal is the search for pairs of long-refinement graphs of equal order. It is easy to see that for infinitely many $n$, Theorems \ref{thm:infinite:families} and \ref{thm:odd:size:max:iterations} yield such pairs of graphs. Still, for example, when evaluating the colourings computed by Colour Refinement on the graphs with string representations S1100XX0 and S001XX10, they differ after less than $n-1$ iterations. To see this, observe that in $G(\mathrm{S1100XX0})$, all vertices of degree 2 have paths of length 3 to a vertex of degree 2 whose inner vertices only have degrees other than 2. This is not the case for $G(\mathrm{S001XX10})$ and this property is detected by Colour Refinement after at most 4 iterations.) Thus, finding for $n \in \mathbb{N}_{\geq 10}$ two graphs of order $n$ which Colour Refinement only distinguishes after $n-1$ iterations remains a challenge.

\bibliography{long_refinement}

\end{document}